\theoremstyle{plain}
\newtheorem{theorem}{Theorem}[section]
\newtheorem{lemma}[theorem]{Lemma}
\theoremstyle{definition}
\newtheorem{definition}{Definition}[section]
\newtheorem{remark}{Remark}
\newcommand{\ignore}[1]{}
\newcommand{\eps}{\varepsilon}
\newcommand{\R}{\ensuremath\mathbb{R}}
\newcommand{\norm}[1]{\left\| #1\right\|}
\DeclareMathOperator{\rk}{rank}
\DeclareMathOperator{\diag}{diag}
\DeclareMathOperator{\nnz}{nnz}
\DeclareMathOperator{\poly}{poly}
\DeclareMathOperator{\argmin}{arg\,min}
\title{Input-Sparsity Low Rank Approximation in Schatten Norm\footnote{Y.\@ Li was supported in part by Singapore Ministry of Education (AcRF) Tier 2 grant MOE2018-T2-1-013. D.\@ P.\@ Woodruff was supported in part by Office of Naval Research (ONR) grant N00014-18-1-2562. Part of this work was done during a visit of D.\@ P.\@ Woodruff to Nanyang Technological University, funded by the aforementioned Singapore Ministry of Education grant.}}
\author{Yi Li\\ Nanyang Technological University\\ \texttt{yili@ntu.edu.sg} \and
	   David P. Woodruff\\ Carnegie Mellon University\\ \texttt{dwoodruf@andrew.cmu.edu}}
\date{}
\begin{document}

\maketitle

\allowdisplaybreaks

\begin{abstract}
We give the first input-sparsity time algorithms for the rank-$k$ low rank approximation problem in every Schatten norm. Specifically, for a given $m\times n$ ($m\geq n$) matrix $A$, our algorithm computes $Y\in \R^{m\times k}$, $Z\in \R^{n\times k}$, which, with high probability, satisfy $\|A-YZ^T\|_p \leq (1+\eps)\|A-A_k\|_p$, where $\|M\|_p = \left (\sum_{i=1}^n \sigma_i(M)^p \right )^{1/p}$ is the Schatten $p$-norm of a matrix $M$ with singular values $\sigma_1(M), \ldots, \sigma_n(M)$, and where $A_k$ is the best rank-$k$ approximation to $A$. Our algorithm runs in time $\tilde{O}(\nnz(A) + mn^{\alpha_p}\poly(k/\eps))$, where $\alpha_p = 0$ for $p\in [1,2)$ and $\alpha_p = (\omega-1)(1-2/p)$ for $p>2$ and $\omega \approx 2.374$ is the exponent of matrix multiplication. For the important case of $p = 1$, which corresponds to the more ``robust'' nuclear norm, we obtain $\tilde{O}(\nnz(A) + m \cdot \poly(k/\epsilon))$ time, which was previously only known for the Frobenius norm $(p = 2)$. Moreover, since $\alpha_p < \omega - 1$ for every $p$, our algorithm has a better dependence on $n$ than that in the singular value decomposition for every $p$. Crucial to our analysis is the use of dimensionality reduction for Ky-Fan $p$-norms. 
\end{abstract}

\section{Introduction}
A common task in processing or analyzing large-scale datasets is to approximate a large matrix $A\in \R^{m\times n}$ ($m\geq n$) with a low-rank matrix. Often this is done with respect to the Frobenius norm, that is, the objective function is to minimize the error $\norm{A-X}_F$ over all rank-$k$ matrices $X\in \R^{m\times n}$ for a rank parameter $k$. It is well-known that the optimal solution is $A_k = P_LA = AP_R$, where $P_L$ is the orthogonal projection onto the top $k$ left singular vectors of $A$, and $P_R$ is the orthogonal projection onto the top $k$ right singular vectors of $A$. Typically this is found via the singular value decomposition (SVD) of $A$, which is an expensive operation. 

For large matrices $A$ this is too slow, so we instead allow for randomized approximation algorithms in the hope of achieving a much faster running time. Formally, given an approximation parameter $\eps > 0$, we would like to find a rank-$k$ matrix $X$ for which $\norm{A-X}_F\leq (1+\eps)\norm{A-A_k}_F$ with large probability. For this relaxed problem, a number of efficient methods are known, which are based on dimensionality reduction techniques such as random projections, importance sampling, and other sketching methods, with running times\footnote{We use the notation $\tilde{O}(f)$ to hide the polylogarithmic factors in $O(f\poly(\log f))$.}$^{,}$\footnote{Since outputting $X$ takes $O(mn)$ time, these algorithms usually output $X$ in factored form, where each factor has rank $k$.} $\tilde{O}(\nnz(A) + m\poly(k/\eps))$, where $\nnz(A)$ denotes the number of non-zero entries of $A$. This is significantly faster than the SVD, which takes $\tilde{\Theta}(mn^{\omega-1})$ time, where $\omega$ is the exponent of matrix multiplication. See \cite{woodruff:book} for a survey.

In this work, we consider approximation error with respect to general matrix norms, i.e., to the Schatten $p$-norm. The Schatten $p$-norm, denoted by $\|\cdot\|_p$, is defined to be the $\ell_p$-norm of the singular values of the matrix. Below is the formal definition of the problem.

\begin{definition}[Low-rank Approximation]
Let $p\geq 1$. Given a matrix $A\in \R^{m\times n}$, find a rank-$k$ matrix $\hat X\in\R^{m\times n}$ for which 
\begin{equation}\label{eqn:low_rank_approx}
\norm{A-\hat X}_p \leq (1+\eps)\min_{X:\rk(X)=k} \norm{A-X}_p.
\end{equation}
\end{definition}
It is a well-known fact (Mirsky's Theorem) that the optimal solution for general Schatten norms coincides with the optimal rank-$k$ matrix $A_k$ for the Frobenius norm, given by the SVD. However, approximate solutions for the Frobenius norm loss function may give horrible approximations for other Schatten $p$-norms. 

Of particular importance is the Schatten $1$-norm, also called the nuclear norm or the trace norm, which is the sum of the singular values of a matrix. It is typically considered to be more robust than the Frobenius norm (Schatten $2$-norm) and has been used in robust PCA applications (see, e.g., \cite{xu:robustPCA,candes:robustPCA,yi:robustPCA}). 

For example, suppose the top singular value of an $n \times n$ matrix $A$ is $1$, the next $2k$ singular values are $1/\sqrt{k}$, and the remaining singular values are $0$. A Frobenius norm rank-$k$ approximation could just choose the top singular direction and pay a cost of $\sqrt{2k \cdot 1/k} = \sqrt{2}$. Since the Frobenius norm of the bottom $n-k$ singular values is $(k+1) \cdot \frac{1}{k}$, this is a $\sqrt{2}$-approximation. On the other hand, if a Schatten $1$-norm rank-$k$ approximation algorithm were to only output the top singular direction, it would pay a cost of $2k \cdot 1/\sqrt{k} = 2\sqrt{k}$. The bottom $n-k$ singular values have Schatten $1$-norm $(k+1) \cdot \frac{1}{\sqrt{k}}$. Consequently, the approximation factor would be $2(1-o(1))$, and one can show if we insisted on a $\sqrt{2}$-approximation or better, a Schatten $1$-norm algorithm would need to capture a constant fraction of the top $k$ directions, and thus capture more of the underlying data than a Frobenius norm solution. 

Consider another example where the top $k$ singular values are all $1$s and the $(k+i)$-th singular value is $1/i$. When $k=o(\log n)$, capturing only the top singular direction will give a $(1+o(1))$-approximation for the Schatten $1$-norm but this gives a $\Theta(\sqrt{k})$-approximation for the Frobenius norm. This example, together with the previous one, shows that the Schatten norm is a genuinely different error metric.

Surprisingly, no algorithms for low-rank approximation in the Schatten $p$-norm were known to run in time $\tilde{O}(\nnz(A) + m\poly(k/\eps))$ prior to this work, except for the special case of $p=2$. We note that the case of $p = 2$ has special geometric structure that is not shared by other Schatten $p$-norms. Indeed, a common technique for the $p = 2$ setting is to first find a $\poly(k/\epsilon)$-dimensional subspace $V$ containing a rank-$k$ subspace inside of it which is a $(1+\epsilon)$-approximate subspace to project the rows of $A$ on. Then, by the Pythagorean theorem, one can first project the rows of $A$ onto $V$, and then find the best rank-$k$ subspace of the projected points inside of $V$. For other Schatten $p$-norms, the Pythagorean theorem does not hold, and it is not hard to construct counterexamples to this procedure for $p \neq 2$. 

To summarize, the SVD runs in time $\Theta(mn^{\omega-1})$, which is much slower than $\nnz(A)\leq mn$. It is also not clear how to adapt existing fast Frobenius-norm algorithms to generate $(1+\eps)$-factor approximations with respect to other Schatten $p$-norms. 

\paragraph{Our Contributions} In this paper we obtain the first provably efficient algorithms for the rank-$k$ $(1+\eps)$-approximation problem with respect to the Schatten $p$-norm for every $p\geq 1$. We describe our results for square matrices below. Our general results for rectangular matrices can be found in the precise statements of the theorems.
\begin{theorem}[informal, combination of Theorems~\ref{thm:p<2} and \ref{thm:p>2}]
Suppose that $m\geq n$ and $A\in\R^{m\times n}$. There is a randomized algorithm which outputs two matrices $Y\in \R^{m\times k}$ and $Z\in \R^{n\times k}$ for which $\hat X = Y Z^T$ satisfies \eqref{eqn:low_rank_approx} with probability at least $0.9$. The algorithm runs in time $O(\nnz(A)\log n) +\tilde{O}(mn^{\alpha_p} \poly(k/\eps))$, where
\[
\alpha_p = \begin{cases}
			0, & 1\leq p\leq 2;\\
			(\omega-1)(1-\frac2p), & p > 2,
		  \end{cases}	
\]
and the hidden constants depend only on $p$.
\end{theorem}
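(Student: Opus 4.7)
My plan is to follow the two-sided sketch-and-solve framework that is standard for Frobenius low-rank approximation, but to replace the Pythagorean step, which is unavailable for $p\neq 2$, by a dimensionality-reduction argument for Ky-Fan $p$-norms. The pipeline has three stages: (i) reduce the column space of the problem; (ii) reduce the row space; (iii) solve a small residual problem and lift back.

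For stage (i) I would apply a CountSketch-style sparse oblivious embedding $S\in\R^{n\times s_1}$ on the right, computing $AS$ in $O(\nnz(A)\log n)$ time, with $s_1=\poly(k/\eps)$. The key structural claim I would aim to prove is that there exists a rank-$k$ matrix whose column space lies in $\spn(AS)$ and which $(1+\eps)$-approximates $A_k$ in Schatten $p$-norm. Since the Schatten $p$-norm does not behave well under a single orthogonal projection, I would instead characterize $\|\cdot\|_p$ via Ky-Fan norms: recall that the Ky-Fan $k$-norm is $\|M\|_{(k)}=\sum_{i\le k}\sigma_i(M)$, and that the Schatten $p$-norm can be expressed (or tightly sandwiched) by a weighted combination of Ky-Fan norms across different ranks together with an $\ell_p$-type aggregation. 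Standard $\ell_2$ subspace embeddings preserve sums of squared singular values, and via the variational $\max_{U^\top U=I_r}\|U^\top A\|$ representation of Ky-Fan norms one obtains that $S$ simultaneously preserves all Ky-Fan norms of the relevant residuals up to $(1+\eps)$. Combining this with the integral/weighted representation of $\|\cdot\|_p$ yields Schatten $p$-norm preservation for the best rank-$k$ approximation inside $\spn(AS)$.

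For stage (ii) I would repeat symmetrically on the left with an oblivious embedding $R\in\R^{s_2\times m}$, reducing to a problem on an $s_2\times s_1$ matrix $RAS$ together with the factors $AS$ and $RA$. For stage (iii) I would write the residual optimization as finding a rank-$k$ matrix $X=YZ^\top$ with $Y\in\spn(AS)$ and $Z^\top\in\spn(RA)$ minimizing the Schatten $p$-norm residual. After an orthonormal basis change this reduces to a low-rank approximation in Schatten $p$-norm on a matrix of size $\poly(k/\eps)\times\poly(k/\eps)$ plus a few multiplications with $AS$ and $RA$. The small problem is solved by SVD, which is valid by Mirsky's theorem. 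The dominant cost is multiplying an $m\times n$ matrix by the sketch matrices and forming products like $AS$, $RA$, and $(RA)(AS)$; for $p\in[1,2)$ these can be arranged to run in $\tilde O(\nnz(A)+m\poly(k/\eps))$, while for $p>2$ the Ky-Fan argument forces the intermediate sketching dimension to be $\Theta(n^{1-2/p})\cdot\poly(k/\eps)$, so that multiplying $m\times n$ by $n\times n^{1-2/p}$ costs $\tilde O(mn^{(\omega-1)(1-2/p)}\poly(k/\eps))$ via fast rectangular matrix multiplication.

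The main obstacle, and what I would spend most of the effort on, is the Ky-Fan dimensionality reduction in stage (i). Concretely, I need that for every rank-$r$ projection $P$ the quantity $\|P(A-X)S\|_F$ (or a closely related Ky-Fan surrogate) is within $(1\pm\eps)$ of $\|P(A-X)\|_F$ simultaneously over a net of candidate rank-$k$ matrices $X$, and that this uniform preservation transfers to the Schatten $p$-norm. The target dimension of the sketch here is what drives $\alpha_p$, and pinning down the tight dependence on $n$ for $p>2$ (as opposed to $p\in[1,2)$, where a fixed $\poly(k/\eps)$ suffices) is the delicate quantitative step. Once this Ky-Fan preservation lemma is in hand, the rest of the argument is bookkeeping of running times using input-sparsity sketches and fast (rectangular) matrix multiplication.
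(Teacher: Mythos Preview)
Your high-level pipeline is reasonable and you correctly identify that Ky-Fan-type quantities are the right surrogate, but the core technical step you propose does not match what actually works, and as stated it has a genuine gap.

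The problematic step is your plan to ``express (or tightly sandwich) the Schatten $p$-norm by a weighted combination of Ky-Fan norms across different ranks'' and then preserve all of these simultaneously via a subspace embedding and the variational formula $\|M\|_{(r)}=\max_{U^\top U=I_r}\|U^\top M\|$. To control the full Schatten $p$-norm this way you would need to preserve Ky-Fan $r$-norms for $r$ ranging all the way up to $\min\{m,n\}$, and an oblivious $\ell_2$ subspace embedding of target dimension $\poly(k/\eps)$ simply cannot do that: once $r$ exceeds the sketch dimension the variational characterization is over $r$-frames that do not live in any fixed low-dimensional subspace, and no net argument will save you without blowing up the sketch size to $\Omega(n)$. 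So the step you flag as ``the main obstacle'' is not just delicate; as written it does not go through.

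What the paper does instead is a \emph{head/tail split}. Fix $r=k/\eps$ and write $\|A(I-Q)\|_p^p = \|A(I-Q)\|_{(p,r)}^p + \sum_{i>r}\sigma_i^p(A(I-Q))$. The tail is handled for free by Weyl interlacing: $\sum_{i>r}\sigma_i^p(A(I-Q))\le \sum_{i>r}\sigma_i^p(A)$, and since $r=k/\eps$ the extra $k$ terms contribute at most $\eps\|A-A_k\|_p^p$ by averaging. So one only needs to preserve the \emph{head} $\|A(I-Q)\|_{(p,r)}$ uniformly over rank-$k$ projections $Q$. For this the paper does not use an oblivious subspace embedding but the additive--multiplicative spectral (projection-cost-preserving) sketch of Cohen et al., which gives $(1-\eps)A^\top A-\eta\|A-A_k\|_F^2 I\preceq A^\top S^\top S A\preceq(1+\eps)A^\top A+\eta\|A-A_k\|_F^2 I$; via the Courant--Fischer min--max this yields per-singular-value bounds $\sigma_i^2(SA(I-Q))=(1\pm\eps)\sigma_i^2(A(I-Q))\pm\eta\|A-A_k\|_F^2$ for all $Q$ simultaneously, with no net. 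Raising to the $p/2$ power and summing $i\le r$ gives head preservation. The $n^{1-2/p}$ factor for $p>2$ then enters for a specific reason you did not identify: one must convert the additive $\|A-A_k\|_F$ error to $\|A-A_k\|_p$, and H\"older gives $\|A-A_k\|_F\le n^{1/2-1/p}\|A-A_k\|_p$, forcing $\eta\approx \eps^{1+2/p}/(k^{2/p}n^{1-2/p})$ and hence a sketch of $\tilde O(n^{1-2/p}\poly(k/\eps))$ rows. For $p<2$ the reverse inequality $\|A-A_k\|_F\le\|A-A_k\|_p$ holds with no $n$-factor, which is why $\alpha_p=0$ there.

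Two smaller points. First, the paper's pipeline is not symmetric two-sided sketching: it sketches once on the left with the PCP matrix $S$, uses a subspace embedding $T$ on the right only to speed up the SVD of $SA$, extracts the rank-$k$ projection $\hat Q'=ZZ^\top$, and then solves the regression $\min_Y\|A-YZ^\top\|_p$ by observing that the Frobenius minimizer (computable with one more sparse sketch) is already good enough in Schatten $p$-norm. Your stage (iii) would instead require a Schatten-$p$ structural lemma about $\spn(AS)$ that you have not supplied. Second, your reliance on oblivious CountSketch alone is insufficient for the ``for all $Q$'' guarantee; the PCP spectral inequality is what makes the uniform singular-value control work without a union bound over projections.
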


In the particular case of $p=1$, and more generally for all $p\in[1,2]$, our algorithm achieves a running time of $O(\nnz(A)\log n + m\poly(k/\eps))$, which was previously known to be possible for $p=2$ only. When $p>2$, the running time begins to depend polynomially on $n$ but the dependence remains $o(n^{\omega-1})$ for all larger $p$. Thus, even for larger values of $p$, when $k$ is subpolynomial in $n$, our algorithm runs substantially faster than the SVD. Empirical evaluations are also conducted to demonstrate our improved algorithm when $p=1$ in Section~\ref{sec:experiments}.

It was shown by Musco and Woodruff~\cite{MW17} that computing a constant-factor low-rank approximation to $A^TA$, given only $A$, requires $\Omega(\nnz(A) \cdot k)$ time. Given that the squared singular values of $A$ are the singular values of $A^TA$, it is natural to suspect that obtaining a constant-factor low rank approximation to the Schatten $4$-norm low-rank approximation would therefore require $\Omega(\nnz(A) \cdot k)$ time. Surprisingly, we show this is not the case, and obtain an $\tilde{O}(\nnz(A) + mn^{(\omega-1)/2}\poly(k/\eps))$ time algorithm. This does not contradict the aforementioned lower bound as it is not clear how to produce efficiently a low-rank approximation to $A^TA$ in the Frobenius norm from a low-rank approximation to $A$ in the Schatten $4$-norm\footnote{Our result finds an orthogonal projection $Q$ onto a $k$-dimensional subspace such that $\norm{A-AQ}_4 \approx \norm{A-A_k}_4$, so $(I-Q)^TA^TA(I-Q)$ is a good residual for the approximation to $A^TA$ in the Frobenius norm. But $A^TA - (I-Q)^TA^TA(I-Q)$ may not have rank at most $k$, and the easily computable candidate $Q^TA^TAQ$ may not be a good approximation to $A^TA$. Consider $A = \left(\begin{smallmatrix} 20 & 20\\ 1 & 2 \end{smallmatrix}\right)$, $v=(\frac{1}{\sqrt{2}}, \frac{1}{\sqrt{2}})^T$ and $Q = vv^T$ (projection onto the subspace spanned by $\{v\}$). Then $A$ has two singular values $28.3637$ and $0.7051$, $A(I-Q)$ has one singular value $0.7071$, which implies that it is a good rank-$1$ approximation to $A$, while $A^T A - Q^T A^T A Q$ has two singular values $1.7707$ and $1.2707$, hence $Q^TA^TAQ$ cannot be a good rank-$1$ approximation to $A^TA$. The other quantity $A^TA - (I-Q)^TA^TA(I-Q)$ has rank $2$ with eigenvalues $804.5$ and $-0.003$.}.

In addition, we generalize the error metric from matrix norms to a wide family of general loss functions, see Section~\ref{sec:generalization} for details. Thus, we considerably broaden the class of loss functions for which input sparsity time algorithms were previously known for. 

\paragraph{Technical Overview.} We illustrate our ideas for $p=1$. Our goal is to find an orthogonal projection $\hat Q'$ for which $\big\|A(I-\hat Q')\big\|_1 \leq (1+O(\eps))\norm{A-A_k}_1$. The crucial idea in the analysis is to split $\|\cdot\|_1$ into a head part $\|\cdot\|_{(r)}$, which, known as the Ky-Fan norm, equals the sum of the top $r$ singular values, and a tail part $\|\cdot\|_{(-r)}$ (this is just a notation---the tail part is not a norm), which equals the sum of all the remaining singular values. Observe that for $r\geq k/\eps$ it holds that $\big\|A(I-\hat Q')\big\|_{(-r)} \leq \norm{A}_{(-r)} \leq \norm{A-A_k}_{(-r)} + \eps\norm{A-A_k}_1$ for any rank-$k$ orthogonal projection $\hat Q'$ and it thus suffices to find $\hat Q'$ for which $\big\|A(I-\hat Q')\big\|_{(r)} \leq (1+\eps)\norm{A-A_k}_{(r)}$. 
To do this, we sketch $A(I-Q)$ on the left by a projection-cost preserving matrix $S$ by Cohen et al.~\cite{CMM17} such that $\norm{SA(I-Q)}_{(r)} = (1\pm\eps)\norm{A(I-Q)}_{(r)} \pm \eps\|A-A_k\|_1$ for all rank-$k$ projections $Q$. 
Then we solve $\min_Q \|SA(I-Q)\|_{(r)}$ over all rank-$k$ projections $Q$ and obtain a $(1+\eps)$-approximate projection $\hat Q'$, which, intuitively, is close to the best projection $P_R$ for $\min_Q \norm{A(I-Q)}_{(r)}$, and can be shown to satisfy the desired property above.

The last step is to approximate $A\hat Q'$, which could be expensive if done trivially, so we reformulate it as a regression problem $\min_Y \norm{A-YZ^T}_1$ over $Y\in \R^{n\times k}$, where $Z$ is an $n\times k$ matrix whose columns form an orthonormal basis of the target space of the projection $\hat Q'$. This latter idea has been applied successfully for Frobenius-norm low-rank approximation (see, e.g., \cite{CW17}). Here we need to argue that the solution to the Frobenius-norm regression $\min_Y \norm{A-YZ^T}_F$ problem gives a good solution to the Schatten $1$-norm regression problem. Finally we output $Y$ and $Z$.

\section{Preliminaries}

\paragraph{Notation} For an $m\times n$ matrix $A$, let $\sigma_1(A)\geq \sigma_2(A)\geq \cdots \geq \sigma_{s}(A)$ denote its singular values, where $s = \min\{m,n\}$. The Schatten $p$-norm ($p\geq 1$) of $A$ is defined to be $\norm{A}_p := \sum_{i=1}^s (\sigma_i(A)^p)^{1/p}$ and the singular $(p,r)$-norm ($r\leq s$) to be $\norm{A}_{(p,r)} = \sum_{i=1}^r (\sigma_i(A)^p)^{1/p}$. It is clear that $\|A\|_p = \|A\|_{(p,s)}$. When $p=2$, the Schatten $p$-norm coincides with the Frobenius norm and we shall use the notation $\norm{\cdot}_F$ in preference to $\norm{\cdot}_2$.

Suppose that $A$ has the singular value decomposition $A = U\Sigma V^T$, where $\Sigma$ is a diagonal matrix of the singular values. For $k \leq \min\{m,n\}$, let $\Sigma_k$ denote the diagonal matrix for the largest $k$ singular values only, i.e., $\Sigma_k = \diag(\sigma_1(A),\dots,\sigma_k(A),0,\dots,0)$. We define $A_k = U\Sigma_k V^T$. The famous Mirsky's theorem states that $A_k$ is the best rank-$k$ approximation to $A$ for any rotationally invariant matrix norm.

For a subspace $E\subseteq \R^n$, we define $P_E$ to be an $n\times \dim(E)$ matrix whose columns form an orthonormal basis of $E$.

\paragraph{Toolkit} There has been extensive research on randomized numerical linear algebra in recent years. Below are several existing results upon which our algorithm will be built.

\begin{definition}[Sparse Embedding Matrix]
Let $\eps > 0$ be an error parameter. The $(n,\eps)$-sparse embedding matrix $R$ of dimension $n\times r$ is constructed as follows, where $r$ is to be specified later. Let $h:[n]\to [r]$ be a random function and $\sigma:[n]\to \{-1,1\}$ be a random function. The matrix $R$ has only $n$ nonzero entries: $R_{i,h(i)} = \sigma(i)$ for all $i\in [n]$. 
The value of $r$ is chosen to be $r=\Theta(1/\eps^2)$ such that 
\[
\Pr_R\{\norm{A^TRR^TB-A^TB}_F^2\leq \eps^2\norm{A}_F^2\norm{B}_F^2\}\geq 0.99
\] for all $A$ with orthonormal columns. 
This is indeed possible by~\cite{CW17,mm13,NN13}. 
\end{definition}
It is clear that, for a matrix $A$ with $n$ columns and an $(n,\eps)$-sparse embedding matrix $R$, the matrix product $AR$ can be computed in $O(\nnz(A))$ time.

\begin{lemma}[Thin SVD~\cite{demmel}]
\label{lem:thin_svd}
Suppose that $A\in \R^{m\times n}$ with $m\geq n$ and the (thin) singular value decomposition is $A=U\Sigma V^T$, where $U\in\R^{m\times n}$ and $\Sigma, V\in \R^{n\times n}$. Computing the full thin SVD takes time $\tilde{O}(mn^{\omega-1})$.
\end{lemma}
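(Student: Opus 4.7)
The plan is to reduce the tall-skinny SVD to a square SVD via a QR preprocessing, and then invoke a fast square SVD. Concretely, I would proceed in three stages. First, compute a thin QR factorization $A = QR$ with $Q\in\R^{m\times n}$ having orthonormal columns and $R\in\R^{n\times n}$ upper triangular. Using a recursive, block-Householder scheme that reduces the panel factorizations to matrix multiplications (the Elmroth--Gustavson style algorithm, which is what one needs to match $\omega$), this costs $O(mn^{\omega-1})$ arithmetic operations: the recursion splits the $n$ columns in halves, each recursive call triggers a trailing-matrix update that is dominated by rectangular matrix multiplications of shapes $m\times (n/2)$ by $(n/2)\times (n/2)$, which can be done in $O(m n^{\omega-1})$ by tiling into $m/n$ square $n\times n$ multiplications.

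Second, compute the SVD of the square matrix $R$, say $R=\tilde U\,\Sigma\,V^T$ with $\tilde U,\Sigma,V\in\R^{n\times n}$. For this I would cite the fast SVD algorithm of Demmel--Dumitriu--Holtz, which computes the SVD of an $n\times n$ matrix in $\tilde{O}(n^\omega)$ arithmetic operations by reducing the problem (via fast bidiagonalization and a divide-and-conquer on the bidiagonal) to a polylogarithmic number of fast matrix multiplications. The $\tilde O$ absorbs the $\log(1/\delta)$ factors coming from the iterative refinement needed for numerical accuracy.

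Third, assemble the thin SVD of $A$ by setting $U := Q\tilde U$, $\Sigma$ and $V$ as above, so that $A = U\Sigma V^T$. The only remaining nontrivial cost is the product $Q\tilde U$, an $m\times n$ by $n\times n$ multiplication, which again costs $O(m n^{\omega-1})$ by slicing $Q$ into $\lceil m/n\rceil$ blocks of size $n\times n$ and performing each block multiplication in time $O(n^\omega)$. Summing the three stages gives $O(mn^{\omega-1}) + \tilde O(n^\omega) + O(mn^{\omega-1}) = \tilde O(mn^{\omega-1})$, as claimed.

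The main obstacle is stage two: getting an $\tilde O(n^\omega)$ SVD rather than the textbook $O(n^3)$ requires the fast bidiagonalization-plus-divide-and-conquer of Demmel--Dumitriu--Holtz, and one has to be careful to track bit-complexity so that only polylogarithmic overhead is incurred for the target accuracy. Stages one and three are purely about showing that rectangular $m\times n$ operations can be expressed as $O(m/n)$ square $n\times n$ matrix multiplications, which is straightforward once the algorithms of stage one and three are written in block form.
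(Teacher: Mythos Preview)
The paper does not actually prove this lemma; it is stated as a toolkit result with a bare citation to Demmel. Your proposal is correct and is the standard argument one would supply: reduce the tall-skinny case to a square case via a fast thin QR in $O(mn^{\omega-1})$ time, apply the Demmel--Dumitriu--Holtz $\tilde O(n^\omega)$ square SVD to the $n\times n$ factor $R$, and form $U=Q\tilde U$ by an $m\times n$ by $n\times n$ multiplication in $O(mn^{\omega-1})$ time. There is nothing to compare against in the paper, and your sketch is sound.
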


\begin{lemma}[Multiplicative Spectral Approximation~\cite{CLMMPS15}] \label{lem:spec_approx}
Suppose that $A\in \R^{m\times n}$ ($n\leq m \leq \poly(n)$) has rank $r$. There exists a sampling matrix $R$ of $O(\eps^{-2}r\log r)$ rows such that $(1-\eps)A^TA\preceq (RA)^T(RA)\preceq (1+\eps)A^TA$ with probability at least $0.9$ and $R$ can be computed in $O(\nnz(A)\log n + n^\omega \log^2 n + n^{2}\eps^{-2})$ time, where $\theta$ is an arbitrary constant in $(0,1]$.
\end{lemma}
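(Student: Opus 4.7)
My plan is to prove this via leverage-score sampling with approximate leverage scores obtained through a sketching pipeline, closing with a matrix concentration inequality.

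First, I would build a well-conditioned surrogate basis for the column space of $A$. Applying a sparse embedding matrix $S$ from the left, compute $SA$ in $O(\nnz(A))$ time; by choosing $S$ with $O(n)$ rows, $SA$ is a constant-factor subspace embedding for the column space of $A$. Then compute a QR decomposition $SA = QT$ with $T \in \R^{n\times n}$ upper triangular in $O(n^\omega)$ time via Lemma~\ref{lem:thin_svd} applied to the square matrix $SA$. Because $SA$ is a constant-factor embedding, the leverage scores $\ell_i(A) = e_i^T A (A^T A)^{\dagger} A^T e_i$ are within constants of $\norm{A_i T^{-1}}_2^2$.

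Second, I would avoid forming the dense $m \times n$ matrix $AT^{-1}$ directly. Instead, draw a Johnson--Lindenstrauss matrix $G \in \R^{n \times O(\log n)}$, compute $T^{-1}G$ in $O(n^2 \log n)$ time, and then $A \cdot (T^{-1} G)$ in $O(\nnz(A)\log n)$ time. Taking squared row norms yields constant-factor estimates $\tilde\ell_i$ of the true leverage scores with probability $1-1/\poly(n)$. Using these estimates, form the sampling-and-rescaling matrix $R$ by independently drawing $N = O(\eps^{-2} r \log r)$ rows with probabilities $p_i \propto \tilde\ell_i$, each scaled by $1/\sqrt{N p_i}$. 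A matrix Chernoff (or Bernstein) bound applied to the sum of rank-one matrices $\tfrac{1}{N p_{i_j}} A_{i_j}^T A_{i_j}$ gives the spectral approximation $(1-\eps)A^T A \preceq (RA)^T (RA) \preceq (1+\eps) A^T A$ with probability at least $0.9$, which is the desired conclusion.

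The main obstacle is matching the claimed running time exactly, in particular the $n^\omega \log^2 n$ and $n^2/\eps^2$ terms. The extra $\log^2 n$ factor is not visible from the single-shot pipeline above and strongly suggests a repeated-halving / iterative refinement as in Cohen et al.: start with a coarse uniform (or $\ell_2$-Leverage) sample, use it to refine leverage score estimates on a smaller problem, and recurse for $O(\log n)$ levels, each of which performs an $\tilde O(n^\omega)$ QR-type operation on an $n \times n$ sketch. The $n^2/\eps^2$ term plausibly arises from the final stage that writes out the sampling matrix together with the $O(\eps^{-2} r \log r)$ probability lookups when $r = \Theta(n)$, or from a final Gram-matrix correction step. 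Threading these pieces together so that each level preserves the constant-factor leverage-score approximation with high probability is the delicate part; once those approximate scores are in hand, the Chernoff step is entirely standard.
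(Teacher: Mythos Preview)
The paper does not prove this lemma at all: it is stated as a toolkit result and attributed directly to~\cite{CLMMPS15} with no accompanying argument, so there is no ``paper's own proof'' to compare against. Your outline is therefore not competing with anything in the text.

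That said, your sketch is broadly in line with the approach in the cited reference. The key mechanism in~\cite{CLMMPS15} is indeed a recursive/repeated-halving scheme that produces constant-factor leverage-score approximations via a logarithmic number of rounds (each round working on a problem of size roughly $n\times n$), which is what generates the $n^\omega\log^2 n$ term; the final sampling and matrix-Chernoff step is standard and matches what you wrote. Your first paragraph, by contrast, describes the older single-shot Clarkson--Woodruff/Drineas et al.\ pipeline, which would give a slightly different time bound; you yourself flag this discrepancy, and your second paragraph correctly diagnoses where the stated runtime actually comes from. One minor correction: the $n^2\eps^{-2}$ term is not from ``writing out the sampling matrix'' (which has only $O(\eps^{-2}r\log r)$ nonzeros) but arises naturally inside the recursive scheme when pushing the approximation quality from constant-factor down to $(1\pm\eps)$ in the last round. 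Also note that the lemma as stated has a dangling parameter $\theta$ that never appears in the bound; this is a vestigial artifact in the paper's statement and not something your proof needs to account for.
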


\begin{lemma}[Additive-Multiplicative Spectral Approximation~\cite{CMM17,christopher:thesis}]\label{lem:add-mul_spec_approx}
Suppose that $A\in \R^{m\times n}$ with $m\geq n$, error parameters $\eps\geq \eta \geq 1/\poly(n)$. Let $K = k + \eps/\eta$.
There exists a randomized algorithm which runs in $O(\nnz(A)\log n) + \tilde{O}(mK^{\omega-1})$ time and outputs a matrix $C$ of $t = \Theta(\eps^{-2}K\log K)$ columns, which are rescaled column samples of $A$ without replacement, such that with probability at least $0.99$,
\[
 (1-\eps)AA^T - \eta\norm{A-A_k}_F^2 I \preceq CC^T \preceq (1+\eps)AA^T + \eta\norm{A-A_k}_F^2 I.
\]
\end{lemma}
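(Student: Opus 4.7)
The plan is to apply ridge leverage score sampling with the ridge parameter $\lambda := \eta\norm{A-A_k}_F^2/\eps$. Let $a_i$ denote the $i$-th column of $A$ and define the $\lambda$-ridge leverage score $\tau_i := a_i^T(AA^T+\lambda I)^{-1}a_i$. Diagonalizing $AA^T$ gives
\[
\sum_{i=1}^n \tau_i = \tr\bigl(AA^T(AA^T+\lambda I)^{-1}\bigr) = \sum_{j=1}^n \frac{\sigma_j(A)^2}{\sigma_j(A)^2+\lambda};
\]
splitting this sum at index $k$, the first $k$ terms each contribute at most $1$, while the tail contributes at most $\norm{A-A_k}_F^2/\lambda = \eps/\eta$. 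Hence $\sum_i\tau_i \leq K$, which is what governs the sample complexity.

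Given overestimates $\tilde\tau_i \geq \tau_i$ with $\sum_i\tilde\tau_i = O(K)$, I would invoke a matrix Bernstein argument in the preconditioned space, applied to the rank-one summands $(AA^T+\lambda I)^{-1/2}\,(p_i^{-1}a_ia_i^T - a_ia_i^T)\,(AA^T+\lambda I)^{-1/2}$. Independent Bernoulli sampling of column $i$ with probability $p_i = \min\{1,\,c\eps^{-2}\tilde\tau_i\log K\}$ followed by rescaling the selected column by $1/\sqrt{p_i}$ yields, with probability $\geq 0.99$, a matrix $C$ of $t=\Theta(\eps^{-2}K\log K)$ columns (in expectation and, by concentration, with high probability) that satisfies
\[
(1-\eps)(AA^T+\lambda I) \preceq CC^T+\lambda I \preceq (1+\eps)(AA^T+\lambda I).
\]
Subtracting $\lambda I$ throughout and substituting the value of $\lambda$ gives the stated additive-multiplicative bound. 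Bernoulli sampling is naturally without replacement, matching the conclusion of the lemma.

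The principal difficulty is producing the overestimates $\tilde\tau_i$ within the budget $O(\nnz(A)\log n) + \tilde O(mK^{\omega-1})$, since evaluating $\tau_i$ directly requires inverting an $m\times m$ matrix. To get around this, I would use a recursive halving-based estimation procedure: maintain a sketch $B$ that approximates $AA^T+\lambda' I$ for a geometrically decreasing sequence of ridge parameters $\lambda'$, starting from a $\lambda'$ large enough that trivial estimates suffice and halving $O(\log n)$ times until $\lambda' = \lambda$. Within each round, one estimates scores $\tilde\tau_i$ as $\norm{\Pi(BB^T+\lambda' I)^{-1/2}a_i}_2^2$ for a Johnson--Lindenstrauss matrix $\Pi$ with $O(\log n)$ rows; computing $\Pi(BB^T+\lambda' I)^{-1/2}A$ requires a single pass over $A$ composed with a sparse embedding, costing $O(\nnz(A))$ per round and $O(\nnz(A)\log n)$ in total. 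The dense linear algebra required to form $(BB^T+\lambda' I)^{-1/2}$ via a thin SVD of the $m\times \tilde O(K)$ current sample $B$ costs $\tilde O(mK^{\omega-1})$ per round by Lemma~\ref{lem:thin_svd}. Resampling at each round refines the sketch, and a standard inductive argument shows the quality of approximation is preserved across rounds. The hardest piece is establishing this induction: one must argue that overestimates obtained from the sketch $B$ of the previous round are indeed valid overestimates with respect to $A$ up to a constant factor, which in turn relies on the multiplicative spectral approximation produced in the previous round.
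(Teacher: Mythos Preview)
The paper does not prove this lemma; it is quoted as a black-box tool from \cite{CMM17,christopher:thesis} in the Preliminaries section, so there is no ``paper's own proof'' to compare against. Your sketch is a faithful outline of the argument in those references: set the ridge $\lambda=\eta\norm{A-A_k}_F^2/\eps$, bound the total ridge leverage score by $K=k+\eps/\eta$, sample by (approximate) ridge leverage scores and apply matrix Bernstein in the preconditioned space to get $(1-\eps)(AA^T+\lambda I)\preceq CC^T+\lambda I\preceq(1+\eps)(AA^T+\lambda I)$, then subtract $\lambda I$. The recursive ridge-halving scheme with Johnson--Lindenstrauss projections for estimating the scores in $O(\nnz(A)\log n)+\tilde O(mK^{\omega-1})$ time is likewise the mechanism used in those works.

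One small point worth tightening: after subtracting $\lambda I$ you obtain $(1-\eps)AA^T-\eps\lambda I\preceq CC^T\preceq(1+\eps)AA^T+\eps\lambda I$, and $\eps\lambda=\eta\norm{A-A_k}_F^2$, so the additive term comes out exactly as stated; you should make that cancellation explicit. Also, the ``without replacement'' clause in the lemma is used later in the paper (to argue $\nnz(SA)\leq\nnz(A)$), and your Bernoulli-sampling remark correctly addresses why the sampled columns are distinct.
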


We also need an elementary inequality.
\begin{lemma}\label{lem:elementary}
Suppose that $p\geq 1$ and $\eps \in (0,1]$. Let $C_{p,\eps} = p(1+1/\eps)^{p-1}$. It holds for $x\in[\eps,1]$ that $(1 + x)^p\leq 1 + C_{p,\eps} x^p$ and that $(1-x)^p \geq 1 - C_{p,\eps} x^p$.
\end{lemma}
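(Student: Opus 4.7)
The plan is to treat the two inequalities separately, noting that each is essentially an interpolation between a linear bound near $x = 0$ and a polynomial bound, where the factor $C_{p,\eps}$ arises from the need to convert the linear estimate into an estimate in $x^p$ using the lower bound $x \geq \eps$.

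For the upper bound $(1+x)^p \leq 1 + C_{p,\eps} x^p$, I would use the fundamental theorem of calculus and the monotonicity of $t \mapsto (1+t)^{p-1}$ (since $p \geq 1$) to write
\[
(1+x)^p - 1 \;=\; \int_0^x p(1+t)^{p-1}\,dt \;\leq\; p(1+x)^{p-1} x \;=\; p x^p \left(1 + \tfrac{1}{x}\right)^{p-1}.
\]
Since $y \mapsto (1 + 1/y)^{p-1}$ is decreasing on $(0,\infty)$ and $x \geq \eps$, the last factor is bounded by $(1+1/\eps)^{p-1}$, giving exactly $C_{p,\eps}\, x^p$.

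For the lower bound $(1-x)^p \geq 1 - C_{p,\eps} x^p$, I would first invoke Bernoulli's inequality (valid because $p \geq 1$ and $x \leq 1$) to obtain $(1-x)^p \geq 1 - px$, so it suffices to show $p x \leq C_{p,\eps} x^p$, i.e.\ $1 \leq (1+1/\eps)^{p-1} x^{p-1}$. This reduces to checking $(1+1/\eps)x \geq 1$, which holds because $x \geq \eps$ forces $(1+1/\eps)x \geq (1+1/\eps)\eps = 1 + \eps \geq 1$, and the $(p-1)$-th power preserves the inequality since $p \geq 1$.

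There is no real obstacle here; the only subtle point is the bookkeeping that shows the constant $C_{p,\eps} = p(1+1/\eps)^{p-1}$ is exactly what one needs to upgrade the linear estimates to the form $C_{p,\eps} x^p$ uniformly on the range $x \in [\eps,1]$. Both steps use $x \geq \eps$ in the same way, which is why the same constant works for both inequalities.
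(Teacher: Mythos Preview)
Your proof is correct. Both inequalities are handled cleanly: the integral bound $(1+x)^p-1\le p(1+x)^{p-1}x$ together with the monotonicity of $(1+1/y)^{p-1}$ gives the first, and Bernoulli's inequality $(1-x)^p\ge 1-px$ together with $(1+1/\eps)x\ge 1$ gives the second.

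The paper argues slightly differently. It first asserts the sharper inequalities
\[
(1+x)^p \le 1 + \frac{(1+\eps)^p-1}{\eps^p}\,x^p,\qquad
(1-x)^p \ge 1 - \frac{1-(1-\eps)^p}{\eps^p}\,x^p
\]
on $[\eps,1]$ (in effect observing that the ratios $\frac{(1+x)^p-1}{x^p}$ and $\frac{1-(1-x)^p}{x^p}$ are maximized at $x=\eps$), and then bounds those two endpoint constants by $C_{p,\eps}$ via a mean-value estimate. Your route bypasses this intermediate ``extremal $x$'' step by applying the mean-value/Bernoulli bound directly to a general $x$ and only afterwards invoking $x\ge\eps$ to pass from $x$ to $x^p$. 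The two arguments are close in spirit, but yours is a bit more streamlined and avoids having to check the monotonicity of the ratio functions.
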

\begin{proof}
It is easy to see that for $x\in [\eps,1]$,
\[
(1+x)^p \leq 1 + \frac{(1+\eps)^p-1}{\eps^p} x^p
\quad \text{and}\quad
(1-x)^p \geq 1 - \frac{1 - (1+\eps)^p}{\eps^p} x^p.
\]
Then note that 
\[
p\left(1 + \frac{1}{\eps}\right)^{p-1} \geq \frac{(1+\eps)^p-1}{\eps^p} \geq \frac{1 - (1+\eps)^p}{\eps^p}. \qedhere
\]
\end{proof}

\begin{algorithm}[!t]
\caption{Outline of the algorithm for finding a low-rank approximation}\label{alg:outline}
\begin{algorithmic}[1]
	\IF{$p<2$}
		\STATE $\eta_1 \gets O((\eps^2/k)^{2/p})$, $\eta_2 \gets O(\eps^2/k^{2/p-1})$
	\ELSE
		\STATE $\eta_1 \!\gets\! O(\eps^{1+2/p}/k^{2/p}n^{1-2/p})$, $\eta_2 \!\gets\! O(\eps^2/n^{1-2/p})$
	\ENDIF
	\STATE Use Lemma~\ref{lem:add-mul_spec_approx} to obtain a sampling matrix $S$ of $s$ rows such that 
	\begin{equation}\label{eqn:PCP_eta1}
	 (1-\eps)A^TA - \eta_1\norm{A-A_k}_F^2 I 
	\preceq A^T S^T SA 
	\preceq (1+\eps)A^TA + \eta_1\norm{A-A_k}_F^2 I.
	\end{equation}

	\STATE $T\gets $ subspace embedding matrix for $s$-dimensional subspaces with error $O(\eps)$
	\STATE $W' \gets $ projection onto the top $k$ left singular vectors of $SAT$ \label{alg:line W'}
	\STATE $Z \gets $ matrix whose columns are an orthonormal basis of the row space of $W'SA$ \label{alg:line Z}
	\STATE $R \gets (n,\Theta(\sqrt{\eta_2/k}))\text{-sparse embedding matrix}$
	\STATE $\hat Y \gets ARP_{\text{rowspace}(Z^TR)}$, where the projector $P$ has $k$ columns \label{alg:line Y hat}
	\RETURN $\hat Y, Z$
\end{algorithmic}
\end{algorithm}

\section{Case $p<2$}
The algorithm is presented in Algorithm~\ref{alg:outline}. In this section we shall prove the correctness and analyze the running time for a constant $p\in [1,2)$. Throughout this section we set $r = k/\eps$.

\begin{lemma}\label{lem:(p,r)-norm-preserved}
Suppose that $p\in (0,2)$ and $S$ satisfies \eqref{eqn:PCP_eta1}. It then holds for all rank-$k$ orthogonal projections $Q$ that
\begin{align*}
&\quad\ (1-\eps)\norm{A(I-Q)}_{(p,r)}^p - r\eta_1^{\frac{p}{2}}\norm{A-A_k}_p^p \\
&\leq \norm{SA(I-Q)}_{(p,r)}^p \\
&\leq (1+\eps)\norm{A(I-Q)}_{(p,r)}^p + r\eta_1^{\frac{p}{2}}\norm{A-A_k}_p^p.
\end{align*}
\end{lemma}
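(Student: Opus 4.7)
The plan is to transfer the spectral sandwich in \eqref{eqn:PCP_eta1} from the matrix $A^TS^TSA$ to the individual singular values of $SA(I-Q)$, and then convert squared singular values to their $p$-th powers using the subadditivity of $t \mapsto t^{p/2}$ (valid because $p/2 < 1$).

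First I would conjugate \eqref{eqn:PCP_eta1} by the symmetric idempotent $(I-Q)$ on both sides. The multiplicative term yields $(I-Q)A^TA(I-Q)$ and $(I-Q)A^TS^TSA(I-Q)$, whose $i$-th eigenvalues are exactly $\sigma_i^2(A(I-Q))$ and $\sigma_i^2(SA(I-Q))$. The additive term yields $\eta_1\|A-A_k\|_F^2(I-Q)$, which is bounded in the PSD order by $\eta_1\|A-A_k\|_F^2 I$ since $(I-Q)\preceq I$. Applying Weyl's monotonicity of eigenvalues (together with the translation identity $\lambda_i(C+cI)=\lambda_i(C)+c$) to the resulting operator sandwich gives, for every $i$,
\[
(1-\eps)\sigma_i^2(A(I-Q)) - \eta_1\|A-A_k\|_F^2 \leq \sigma_i^2(SA(I-Q)) \leq (1+\eps)\sigma_i^2(A(I-Q)) + \eta_1\|A-A_k\|_F^2.
\]

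Next I would raise to the $p/2$-th power. Subadditivity of $t\mapsto t^{p/2}$ on $[0,\infty)$ gives $(a+b)^{p/2}\leq a^{p/2}+b^{p/2}$ and, for $a\geq b\geq 0$, $(a-b)^{p/2}\geq a^{p/2}-b^{p/2}$. Combined with $(1+\eps)^{p/2}\leq 1+\eps$ and $(1-\eps)^{p/2}\geq 1-\eps$ (both because $p/2\leq 1$), this yields the upper bound $\sigma_i^p(SA(I-Q))\leq (1+\eps)\sigma_i^p(A(I-Q))+\eta_1^{p/2}\|A-A_k\|_F^p$ directly, and likewise the lower bound whenever the bracket $(1-\eps)\sigma_i^2(A(I-Q))-\eta_1\|A-A_k\|_F^2$ is nonnegative. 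When that bracket is negative, i.e.\ $(1-\eps)\sigma_i^2(A(I-Q))<\eta_1\|A-A_k\|_F^2$, raising to the $p/2$-th power gives $(1-\eps)\sigma_i^p(A(I-Q))\leq (1-\eps)^{p/2}\sigma_i^p(A(I-Q))\leq \eta_1^{p/2}\|A-A_k\|_F^p$, so the desired lower bound follows from $\sigma_i^p(SA(I-Q))\geq 0$. In both cases,
\[
(1-\eps)\sigma_i^p(A(I-Q)) - \eta_1^{p/2}\|A-A_k\|_F^p \leq \sigma_i^p(SA(I-Q)) \leq (1+\eps)\sigma_i^p(A(I-Q)) + \eta_1^{p/2}\|A-A_k\|_F^p.
\]

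Finally, summing over $i=1,\dots,r$ gives exactly the claimed sandwich but with $\|A-A_k\|_F^p$ in place of $\|A-A_k\|_p^p$. Since $p\leq 2$, the $\ell_q$-norm of the vector of singular values of $A-A_k$ is decreasing in $q$, so $\|A-A_k\|_F=\|A-A_k\|_2\leq \|A-A_k\|_p$, and replacing $\|A-A_k\|_F^p$ by the larger $\|A-A_k\|_p^p$ only weakens the sandwich. The main subtlety is the case analysis for the lower bound when the additive term dominates the multiplicative one; the rest reduces to Weyl's inequality and standard concavity bounds for $t^{p/2}$.
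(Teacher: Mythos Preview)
Your argument is correct and follows essentially the same route as the paper: conjugate \eqref{eqn:PCP_eta1} by $I-Q$, pass to individual singular values via the Courant--Fischer (Weyl) characterization, convert the squared-singular-value sandwich to a $p$-th-power sandwich using concavity/subadditivity of $t\mapsto t^{p/2}$, sum over $i\le r$, and finally invoke $\|A-A_k\|_F\le\|A-A_k\|_p$. The paper compresses the third step into a single line; your explicit case split for the lower bound (when the additive term dominates) is a welcome elaboration of what the paper leaves implicit.
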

\begin{proof}
Since $S$ satisfies \eqref{eqn:PCP_eta1}, it holds for any rank-$k$ orthogonal projection $Q$ that
\begin{align*}
&\quad\ (1-\eps)(I-Q)A^TA(I-Q) - \eta_1\norm{A-A_k}_F^2 I \\
&\preceq (I-Q)A^T S^T SA(I-Q) \\
&\preceq (1+\eps)(I-Q)A^TA(I-Q) + \eta_1\norm{A-A_k}_F^2 I.
\end{align*}

The following relationship between singular values of $SA(I-Q)$ and $A(I-Q)$ is an immediate corollary via the max-min characterization of singular values (cf., e.g., Lemma 7.2 of \cite{LNW19})
\begin{equation}\label{eqn:key_inequality}
\begin{aligned}
&\quad\ (1-\eps)\sigma_i^2(A(I-Q)) - \eta_1\norm{A-A_k}_F^2  \\
&\leq \sigma_i^2(SA(I-Q)) \\
&\leq (1+\eps)\sigma_i^2(A(I-Q)) + \eta_1\norm{A-A_k}_F^2
\end{aligned}
\end{equation}

Since $p<2$ and thus $\norm{\cdot}_F\leq \norm{\cdot}_p$, we have from~\eqref{eqn:key_inequality} that
\begin{align*}
&\quad\ (1-\eps)\sigma_i^p(A(I-Q)) - \eta_1^{\frac{p}{2}}\norm{A-A_k}_p^p\\
&\leq \sigma_i^p(SA(I-Q)) \\
&\leq (1+\eps)\sigma_i^p(A(I-Q)) + \eta_1^{\frac{p}{2}}\norm{A-A_k}_p^p.
\end{align*}
Passing to the $(p,r)$-singular norm yields the desired result.
\end{proof}

\begin{lemma}\label{lem:find Q hat}
When $p\in (0,2)$ is a constant and $\eps\in (0,1/2]$, let $\hat Q' = ZZ^T$ be the projection onto the column space of $Z$, where $Z$ is as defined in Line~\ref{alg:line Z} of Algorithm~\ref{alg:outline}. With probability at least $0.99$, it holds that
\begin{equation}\label{eqn:hat Q'}
\norm{SA(I-\hat Q')}_{(p,r)} \leq (1+\eps)\min_{Q}\norm{SA(I-Q)}_{(p,r)},
\end{equation}
where the minimization on the right-hand side is over all rank-$k$ orthogonal projections $Q$.
\end{lemma}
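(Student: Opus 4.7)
The plan is to route the analysis through the intermediate quantity $(I-W')SA$: first I would show that $\norm{SA(I-\hat Q')}_{(p,r)} \leq \norm{(I-W')SA}_{(p,r)}$ using the structure of $Z$, and then use the subspace embedding property of $T$ to argue $\norm{(I-W')SA}_{(p,r)} \leq (1+O(\eps))\norm{SA-(SA)_k}_{(p,r)}$. The conclusion would follow from Mirsky's theorem, which identifies $\min_Q \norm{SA(I-Q)}_{(p,r)}$ with $\norm{SA-(SA)_k}_{(p,r)}$ for the unitarily invariant $(p,r)$-norm.

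For the first reduction, since the columns of $Z$ form an orthonormal basis for the row space of $W'SA$, we have $W'SA = W'SA\cdot ZZ^T$, hence $W'SA(I-\hat Q') = 0$. Decomposing $SA = W'SA + (I-W')SA$ then gives $SA(I-\hat Q') = (I-W')SA(I-\hat Q')$. Since post-multiplication by an orthogonal projection cannot increase singular values, $\sigma_i(SA(I-\hat Q')) \leq \sigma_i((I-W')SA)$ for every $i$, and summing the $p$-th powers of the top $r$ values yields the desired bound.

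The main work is the second step. With probability at least $0.99$, $T$ is an $O(\eps)$-subspace embedding for every $s$-dimensional subspace of $\R^n$; applied to the (at most $s$-dimensional) row space of $SA$, this yields the PSD inequality $(1-O(\eps))SAA^TS^T \preceq SATT^TA^TS^T \preceq (1+O(\eps))SAA^TS^T$. Conjugating by $I-W'$ preserves the inequality, and extracting $i$-th eigenvalues gives $\sigma_i((I-W')SA)^2 \leq (1+O(\eps))\sigma_i((I-W')SAT)^2$. Because $W'$ is exactly the projection onto the top-$k$ left singular vectors of $SAT$, a direct SVD computation identifies $\sigma_i((I-W')SAT) = \sigma_{k+i}(SAT)$; a second use of the PSD inequality at the $(k+i)$-th eigenvalue gives $\sigma_{k+i}(SAT) \leq (1+O(\eps))\sigma_{k+i}(SA)$. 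Chaining these estimates and raising to the $p$-th power,
\[
\norm{(I-W')SA}_{(p,r)}^p = \sum_{i=1}^r \sigma_i((I-W')SA)^p \leq (1+O(\eps))^p \sum_{i=1}^r \sigma_{k+i}(SA)^p = (1+O(\eps))^p \norm{SA-(SA)_k}_{(p,r)}^p.
\]
Combining with the first step and choosing the embedding error of $T$ to be a sufficiently small constant multiple of $\eps$ delivers the final $(1+\eps)$ bound.

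The main obstacle is the second step: one must carefully pair the two-sided PSD inequality with the specific definition of $W'$. The key identifications are that conjugation by $I-W'$ preserves the operator inequality (so $\sigma_i((I-W')SA)$ is controlled by $\sigma_i((I-W')SAT)$), and that the definition of $W'$ as the top-$k$ left singular projector of $SAT$ makes $\sigma_i((I-W')SAT)$ equal $\sigma_{k+i}(SAT)$ exactly. The remaining pieces are routine bookkeeping together with Mirsky's theorem.
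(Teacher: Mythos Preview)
Your proposal is correct and follows essentially the same route as the paper's proof: both establish $\norm{SA(I-\hat Q')}_{(p,r)}\leq\norm{(I-W')SA}_{(p,r)}$ from the definition of $Z$, then use the subspace-embedding property of $T$ together with the optimality of $W'$ for $SAT$ to bound $\norm{(I-W')SA}_{(p,r)}$ by $(1+O(\eps))\norm{SA-(SA)_k}_{(p,r)}$, and finish via Mirsky's theorem. Your version is slightly more explicit, working singular value by singular value through the PSD inequality and the identity $\sigma_i((I-W')SAT)=\sigma_{k+i}(SAT)$, whereas the paper states the singular-value preservation of $T$ for $(I-W)SA$ more summarily; the content is the same.
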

\begin{proof}
Observe that
\[
\min_Q \norm{SA-SAQ}_{(p,r)} = \min_W \norm{SA-WSA}_{(p,r)},
\]
where the minimizations are over all rank-$k$ orthogonal projections $Q$ and all rank-$k$ orthogonal projections $W$, and the equality is achieved when $Q$ is the projection onto the right top $k$ singular vectors of $SA$ and $W$ the left top $k$ singular vectors.

Since $T$ is an oblivious subspace embedding matrix and preserves all singular values of $(I-W)SA$ up to a factor of $(1\pm\eps)$, we have
\[
\min_{W} \norm{SAT\!-\!{W}SAT}_{(p,r)} \!=\! (1\pm \eps)\min_W \norm{SA \!-\! WSA}_{(p,r)}.
\]
The minimization on the left-hand side above is easy to solve: the minimizer $W'$ is exactly the projection onto the top $k$ singular vectors of $SAT$, as computed in Line~\ref{alg:line W'} of Algorithm~\ref{alg:outline}. Since $\hat Q'$ is the projection onto the row space of $W'SA$, it holds that the row space of $SA\hat Q'$ is the closest space to that of $SA$ in the row space of $W'SA$. Hence
\[
\norm{SA-SA\hat Q'}_{(p,r)} \leq \norm{SA-W'SA}_{(p,r)}.
\]
The claimed result \eqref{eqn:hat Q'} then follows from
\begin{align*}
 \norm{SA-W'SA}_{(p,r)} &\leq \frac{1}{1-\eps} \norm{SAT-W'SAT}_{(p,r)} \\
&= \frac{1}{1-\eps}\min_W \norm{SAT-WSAT}_{(p,r)} \\
&\leq \frac{1+\eps}{1-\eps}\min_W \norm{SA-WSA}_{(p,r)} \\
&\leq (1+4\eps)\min_Q \norm{SA-SAQ}_{(p,r)}
\end{align*}
and rescaling $\eps$. 
\end{proof}

\begin{lemma}\label{lem:projection-away for A}
Let $\eps \in (0,1/2]$. Suppose that $\hat Q'$ satisfies \eqref{eqn:hat Q'}. Then it holds that 
\[
\norm{A(I-\hat Q')}_{(p,r)}^p \leq (1+c_1\eps)\norm{A-A_k}_{(p,r)}^p \\ + c_2k\eta_1^{p/2}\norm{A-A_k}_p^p.
\]
for some absolute constants $c_1,c_2>0$.
\end{lemma}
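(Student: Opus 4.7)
The plan is to sandwich $\|A(I-\hat Q')\|_{(p,r)}^p$ between the two sides of Lemma~\ref{lem:(p,r)-norm-preserved}, using the sketched objective $\|SA(I-\hat Q')\|_{(p,r)}^p$ as an intermediary and exploiting $\hat Q'$'s near-optimality for that sketched objective via Lemma~\ref{lem:find Q hat}.

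First, I would apply the \emph{lower} inequality of Lemma~\ref{lem:(p,r)-norm-preserved} at $Q=\hat Q'$ and rearrange to obtain
\[
\|A(I-\hat Q')\|_{(p,r)}^p \;\leq\; \tfrac{1}{1-\eps}\Bigl(\|SA(I-\hat Q')\|_{(p,r)}^p + r\eta_1^{p/2}\|A-A_k\|_p^p\Bigr).
\]
Next, Lemma~\ref{lem:find Q hat} raised to the $p$-th power, combined with the fact that $(1+\eps)^p\leq 1+O(\eps)$ for the constant $p\in[1,2)$ and $\eps\leq 1/2$ (directly, or via Lemma~\ref{lem:elementary}), yields
\[
\|SA(I-\hat Q')\|_{(p,r)}^p \;\leq\; (1+O(\eps))\min_{Q}\|SA(I-Q)\|_{(p,r)}^p,
\]
where the minimum is over all rank-$k$ orthogonal projections $Q$.

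To bound this sketched minimum in terms of $A$-side quantities, I would compare against the oracle choice $Q^\star$, the projection onto the top-$k$ right singular vectors of $A$. Mirsky's theorem, applied to the unitarily invariant norm $\|\cdot\|_{(p,r)}$, gives $\|A(I-Q^\star)\|_{(p,r)}=\|A-A_k\|_{(p,r)}$, and the \emph{upper} inequality of Lemma~\ref{lem:(p,r)-norm-preserved} at $Q=Q^\star$ then produces
\[
\min_{Q}\|SA(I-Q)\|_{(p,r)}^p \;\leq\; \|SA(I-Q^\star)\|_{(p,r)}^p \;\leq\; (1+\eps)\|A-A_k\|_{(p,r)}^p + r\eta_1^{p/2}\|A-A_k\|_p^p.
\]
Chaining the three displays, merging the two additive $r\eta_1^{p/2}\|A-A_k\|_p^p$ contributions, and expanding $1/(1-\eps)=1+O(\eps)$ gives
\[
\|A(I-\hat Q')\|_{(p,r)}^p \;\leq\; (1+O(\eps))\|A-A_k\|_{(p,r)}^p + O(r\eta_1^{p/2})\|A-A_k\|_p^p,
\]
which, on recalling $r=k/\eps$ and absorbing the leftover factor into the constants, matches the stated bound for some absolute $c_1,c_2>0$.

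The main point requiring care is bookkeeping the directions of the two-sided bound in Lemma~\ref{lem:(p,r)-norm-preserved}: use the lower side at $\hat Q'$ to push the $A$-side objective up to the $S$-side, and the upper side at $Q^\star$ to push the $S$-side minimum up to an $A$-side quantity. The only conceptual subtlety is that $Q^\star$ achieves $\min_{Q}\|A(I-Q)\|_{(p,r)}$ over rank-$k$ orthogonal projections --- this holds because $A_k=AQ^\star$ is itself of the form $AQ$, so restricting Mirsky's unconstrained minimum to rank-$k$ projectors $Q$ does not raise it.
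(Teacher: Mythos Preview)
Your proposal is correct and follows essentially the same chain as the paper: apply the lower side of Lemma~\ref{lem:(p,r)-norm-preserved} at $\hat Q'$, then Lemma~\ref{lem:find Q hat}, then compare against $Q^\star$ and apply the upper side of Lemma~\ref{lem:(p,r)-norm-preserved}. One minor bookkeeping point: the additive term you correctly obtain is $O(r\eta_1^{p/2})\|A-A_k\|_p^p$, and since $r=k/\eps$ the factor $1/\eps$ cannot be ``absorbed into an absolute constant'' to yield the stated $c_2k$---this is a typo in the paper's statement (the downstream Lemma~\ref{lem:hat Q is good} indeed uses $c_2r$), so your derivation with $r$ is the right one.
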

\begin{proof}
Let $\hat Q = \argmin_Q \norm{SA(I-Q)}_{(p,r)}$, where the minimization is over all rank-$k$ projections $Q$. Let $Q^\ast$ be the orthogonal projection onto the top $k$ right singular vectors of $A$. It follows that
\begin{align*}
 \norm{A(I-\hat Q')}_{(p,r)}^p 
&\leq \frac{1}{1-\eps} \norm{SA(I-\hat Q')}_{(p,r)}^p + \frac{1}{1-\eps} k\eta_1^{\frac{p}{2}}\norm{A-A_k}_p^p \\
&\leq\frac{(1+\eps)^p}{1-\eps} \norm{SA(I-\hat Q)}_{(p,r)}^p + \frac{1}{1-\eps} k\eta_1^{\frac{p}{2}}\norm{A-A_k}_p^p \\
&\leq\frac{(1+\eps)^p}{1-\eps} \norm{SA(I-Q^\ast)}_{(p,r)}^p + \frac{1}{1-\eps} k\eta_1^{\frac{p}{2}}\norm{A-A_k}_p^p \\
&\leq \frac{(1+\eps)^p}{1-\eps}\left( (1+\eps)\norm{A(I-Q^\ast)}_{(p,r)}^p + k\eta_1^{\frac{p}{2}}\norm{A-A_k}_p^p \right) 
 + \frac{1}{1-\eps} k\eta_1^{\frac{p}{2}}\norm{A-A_k}_p^p\\
&= \frac{(1\!+\!\eps)^{p+1}}{1-\eps}\norm{A\!-\!A_k}_{(p,r)}^p + \frac{(1\!+\!\eps)^p\!+\!1}{1-\eps}k\eta_1^{\frac{p}{2}}\norm{A\!-\!A_k}_p^p,
\end{align*}
where the first inequality follows from Lemma~\ref{lem:(p,r)-norm-preserved}, the second inequality Lemma~\ref{lem:find Q hat}, the third inequality follows from the optimality of $\hat Q$ and the fourth inequality again from Lemma~\ref{lem:(p,r)-norm-preserved}.
\end{proof}

The next lemma is an immediate corollary of the preceding lemma.
\begin{lemma}\label{lem:hat Q is good}
Let $\eps \in (0,1/2]$. Suppose that $\hat Q'$ satisfies \eqref{eqn:hat Q'}. Then it holds for some absolute constants $c_1,c_2>0$ that
\[
\norm{A(I-\hat Q')}_p^p \leq (1+c_1\eps)\norm{A-A_k}_p^p
\]
whenever $\eta_1 \leq (\eps^2/(c_2k))^{2/p}$.
\end{lemma}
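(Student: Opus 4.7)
The plan is to derive the full Schatten $p$-norm bound from the Ky-Fan-type bound of Lemma~\ref{lem:projection-away for A} by controlling the tail $\sum_{i>r}\sigma_i^p(A(I-\hat Q'))$ where $r = k/\eps$. Throughout, I will assume $\eps$ is small enough (a rescaling at the end handles the stated range $\eps\in(0,1/2]$).

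First, I would split
\[
\norm{A(I-\hat Q')}_p^p = \norm{A(I-\hat Q')}_{(p,r)}^p + \sum_{i=r+1}^n \sigma_i^p(A(I-\hat Q')),
\]
and likewise $\|A-A_k\|_p^p = \|A-A_k\|_{(p,r)}^p + \sum_{i=r+1}^n \sigma_i^p(A-A_k)$. The head term is already handled by Lemma~\ref{lem:projection-away for A}, which gives the bound $(1+c_1\eps)\|A-A_k\|_{(p,r)}^p + c_2 k\eta_1^{p/2}\|A-A_k\|_p^p$. All that remains is to compare the tails.

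For the tail, I would invoke Weyl's singular value inequality $\sigma_{i+j-1}(M+N)\le \sigma_i(M)+\sigma_j(N)$ with $M = A$, $N = -A\hat Q'$, noting $\rk(A\hat Q')\le k$. Taking $j=k+1$ gives $\sigma_{i+k}(A(I-\hat Q'))\le \sigma_i(A)$, hence $\sigma_i(A(I-\hat Q'))\le \sigma_{i-k}(A)$ for $i>k$. Summing over $i>r$ (and using $r\ge 2k$ since $\eps\le 1/2$) yields
\[
\sum_{i=r+1}^n \sigma_i^p(A(I-\hat Q')) \le \sum_{j=r-k+1}^{n-k}\sigma_j^p(A) = \sum_{j=r+k+1}^n \sigma_j^p(A) + \sum_{j=r-k+1}^{r+k}\sigma_j^p(A).
\]
The first piece is at most $\sum_{i=r+1}^n \sigma_i^p(A-A_k) = \mathrm{tail}_r(A-A_k)$. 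The second piece has $2k$ terms, each at most $\sigma_{r-k+1}^p(A)$, and by monotonicity $\|A-A_k\|_p^p \ge \sum_{j=k+1}^{r-k+1}\sigma_j^p(A) \ge (r-2k+1)\sigma_{r-k+1}^p(A)$. Since $r=k/\eps$, this gives $\sum_{j=r-k+1}^{r+k}\sigma_j^p(A) \le \frac{2k}{r-2k+1}\|A-A_k\|_p^p = O(\eps)\|A-A_k\|_p^p$.

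Putting the head and tail bounds together,
\[
\norm{A(I-\hat Q')}_p^p \le (1+c_1\eps)\bigl(\|A-A_k\|_{(p,r)}^p + \mathrm{tail}_r(A-A_k)\bigr) + \bigl(c_2 k\eta_1^{p/2} + O(\eps)\bigr)\|A-A_k\|_p^p,
\]
and the parenthesized sum is exactly $\|A-A_k\|_p^p$. Finally, with the assumption $\eta_1 \le (\eps^2/(c_2 k))^{2/p}$ I get $c_2 k\eta_1^{p/2} \le \eps^2 \le \eps$, so the whole right-hand side is $(1+O(\eps))\|A-A_k\|_p^p$; rescaling $\eps$ delivers the claimed inequality. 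The main obstacle is the Weyl-inequality shift by $k$, which introduces the extra $2k$-term band; handling it cleanly requires the deliberate choice $r = k/\eps$ so that monotonicity of the singular values converts this band into an $O(\eps)$ multiplicative error.
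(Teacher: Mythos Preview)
Your proof is correct and follows essentially the same route as the paper: split $\|A(I-\hat Q')\|_p^p$ into head and tail at index $r=k/\eps$, apply Lemma~\ref{lem:projection-away for A} to the head, bound the tail singular values of $A(I-\hat Q')$ by those of $A$, and control the leftover band of $O(k)$ terms via monotonicity of singular values together with the choice $r=k/\eps$. The one difference is that the paper avoids the Weyl shift entirely: since $I-\hat Q'$ is an orthogonal projection, one has $A(I-\hat Q')(A(I-\hat Q'))^T = A(I-\hat Q')A^T \preceq AA^T$, hence $\sigma_i(A(I-\hat Q'))\le\sigma_i(A)$ for every $i$ with no index shift. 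This replaces your $2k$-term band $\sum_{j=r-k+1}^{r+k}\sigma_j^p(A)$ by a $k$-term band $\sum_{j=r+1}^{r+k}\sigma_j^p(A)$ and removes what you flagged as the main obstacle; the end estimate is of course the same up to constants.
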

\begin{proof}
Again let $\hat Q = \argmin_Q \norm{SA(I-Q)}_{(p,r)}$, where the minimization is over all rank-$k$ projections $Q$. Observe that
\begin{align*}
\norm{A(I-\hat Q')}_p^p &= \norm{A(I-\hat Q')}_{(p,r)}^p + \sum_{i\geq r+1} \sigma_i^p(A(I-\hat Q')) \\
					 &\leq (1+c_1\eps)\norm{A-A_k}_{(p,r)}^p + c_2r\eta_1^{p/2}\norm{A-A_k}_p^p + \sum_{i\geq r+1} \sigma_i^p(A) \\
					 &\leq (1+c_1\eps)\norm{A-A_k}_p^p + c_2r\eta_1^{p/2}\norm{A-A_k}_p^p + \sum_{i=r+1}^{r+k+1} \sigma_i^p(A)\\
					 &\leq (1+c_1\eps)\norm{A-A_k}_p^p + c_2r\eta_1^{p/2}\norm{A-A_k}_p^p + \frac{k}{r} \|A-A_k\|_p^p \\
					 &\leq (1+(c_1+1)\eps)\norm{A-A_k}_p^p + c_2r\eta_1^{p/2}\norm{A-A_k}_p^p
\end{align*}
where we used the preceding lemma (Lemma~\ref{lem:projection-away for A}) in the first inequality and $r = k/\eps$ in the last inequality. The claimed result holds when $\eta \leq (\eps/(c_3r))^{2/p}$.
\end{proof}

So far we have found a rank-$k$ orthogonal projection $\hat Q' = ZZ^T$ such that
\[
\norm{A-A\hat Q'}_p \leq (1+c_1\eps)\norm{A-A_k}_p
\]
for some absolute constant $c_1$. However, it is not clear how to compute the matrix product $A\hat Q'$ efficiently. Hence we consider the regression problem
\[
\min_{Y: \rk(Y)= k} \norm{A-YZ^T}_p.
\]
It is clear that the minimizer is $Y=AZ$, which satisfies that $\norm{A-YZ^T}_p = \norm{A-A\hat Q'}_p$, since the rowspace of $YZ^T$ is a $k$-dimensional subspace of the rowspace of $Z^T$ and thus it is exactly the rowspace of $Q$. The next lemma shows that $\hat Y$ is an approximation to $Y$.

\begin{lemma}\label{lem: Y hat}
When $1\leq p<2$ is a constant, the matrix $\hat Y$ defined in Line~\ref{alg:line Y hat} of Algorithm~\ref{alg:outline} satisfies with probability at least $0.99$ that
\[
\norm{A-\hat{Y} Z^T}_p \leq (1+c\eps)\min_{Y: \rk(Y)= k} \norm{A-YZ^T}_p,
\]
for some absolute constant $c>0$, whenever $\eta_2 \leq \eps^2/(2k)^{2/p-1}$.
\end{lemma}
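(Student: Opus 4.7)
The strategy is to decompose $A - \hat Y Z^T$ into two pieces with orthogonal rowspaces, apply a Rotfel'd-type trace inequality to split the Schatten $p$-norm, and then convert a Frobenius sketching bound on the regression error into a Schatten-$p$ bound by exploiting the rank-$k$ constraint on the error.

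Setting $E = \hat Y - AZ$, one has $A - \hat Y Z^T = A(I-\hat Q') - EZ^T$, and because $(I-\hat Q')Z = 0$ and $Z^TZ = I_k$, the cross-terms vanish, giving $(A-\hat Y Z^T)(A-\hat Y Z^T)^T = A(I-\hat Q')A^T + EE^T$. For $p\in [1,2)$ the exponent $p/2$ lies in $[1/2,1)$, so $x \mapsto x^{p/2}$ is concave on $[0,\infty)$ with value $0$ at $0$, and Rotfel'd's trace inequality yields
\[
 \|A - \hat Y Z^T\|_p^p = \operatorname{tr}\!\bigl((A(I-\hat Q')A^T + EE^T)^{p/2}\bigr) \leq \|A(I-\hat Q')\|_p^p + \|E\|_p^p.
\]

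Since $E$ has at most $k$ columns, $\operatorname{rank}(E)\leq k$, and the power-mean inequality on singular values gives $\|E\|_p^p \leq k^{1-p/2}\|E\|_F^p$. The matrix $\hat Y$ is (effectively) the minimiser of the sketched Frobenius regression $\min_Y \|(A-YZ^T)R\|_F$, and the $(n,\sqrt{\eta_2/k})$-sparse embedding $R$ satisfies both the subspace-embedding and approximate-matrix-multiplication properties needed to deliver the strong affine-embedding bound $\|E\|_F^2 \leq O(\eta_2/k)\|A(I-\hat Q')\|_F^2$ with probability at least $0.99$. Using $\|A(I-\hat Q')\|_F \leq \|A(I-\hat Q')\|_p$ (valid for $p\leq 2$), these combine to
\[
\|E\|_p^p \leq k^{1-p/2}(\eta_2/k)^{p/2}\|A(I-\hat Q')\|_p^p = k^{1-p}\eta_2^{p/2}\|A(I-\hat Q')\|_p^p.
\]

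The hypothesis $\eta_2 \leq \eps^2/(2k)^{2/p-1}$ makes $k^{1-p}\eta_2^{p/2} = O(\eps^p)$, hence $\|E\|_p^p \leq O(\eps^p)\|A(I-\hat Q')\|_p^p$. Plugging this back into the Rotfel'd bound gives $\|A - \hat Y Z^T\|_p^p \leq (1 + O(\eps^p))\|A(I-\hat Q')\|_p^p$. Since, as observed just before the lemma statement, the minimiser of $\|A - YZ^T\|_p$ over $Y$ is $Y = AZ$ with value $\|A(I-\hat Q')\|_p$, taking $p$-th roots and using $(1+\eps^p)^{1/p} \leq 1 + c\eps$ for $p\geq 1$ and $\eps\in (0,1/2]$ yields the claim after absorbing constants.

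The main technical obstacle is verifying the strong sketching guarantee $\|E\|_F^2 \leq O(\eta_2/k)\|A(I-\hat Q')\|_F^2$ at the precise precision $\sqrt{\eta_2/k}$: a weaker $(1+\sqrt{\eta_2/k})$-regression bound, which is what one gets by plugging the sketch into a generic subspace-embedding argument, is insufficient — one really needs the "strong" affine-embedding form where the Frobenius error of $E$ scales like $\eps_R$ and not $\sqrt{\eps_R}$. Once that bound is in hand, the Rotfel'd inequality and the rank-$k$ Schatten/Frobenius conversion are essentially routine.
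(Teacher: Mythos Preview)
Your argument is correct and reaches the same conclusion, but it differs from the paper's proof in two respects worth noting.

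First, the paper does not use Rotfel'd's inequality at all. It simply applies the triangle inequality for the Schatten $p$-norm,
\[
\|A-\hat Y Z^T\|_p \leq \|A-AZZ^T\|_p + \|(\hat Y - AZ)Z^T\|_p = \|A-AZZ^T\|_p + \|E\|_p,
\]
the last equality holding because $Z$ has orthonormal columns. From there the paper proceeds exactly as you do: $\|E\|_p \leq (2k)^{1/p-1/2}\|E\|_F$, then the sketching bound, then $\|\cdot\|_F \leq \|\cdot\|_p$. Your Rotfel'd step yields the slightly sharper $\|A-\hat Y Z^T\|_p^p \leq \|A-AZZ^T\|_p^p + \|E\|_p^p$, but the triangle inequality already suffices and is more elementary.

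Second, the sketching guarantee you quote is off by a factor of $\sqrt{k}$. With $R$ an $(n,\Theta(\sqrt{\eta_2/k}))$-sparse embedding, the approximate-matrix-product bound applied to the orthonormal matrix $Z$ picks up a factor $\|Z\|_F=\sqrt{k}$, so the conclusion one gets (as the paper states, citing Lemma~7.8 of~\cite{CW17}) is $\|E\|_F \leq \sqrt{\eta_2}\,\|A-AZZ^T\|_F$, not $\sqrt{\eta_2/k}$. Fortunately your downstream arithmetic still closes with this weaker bound: $k^{1-p/2}\eta_2^{p/2} \leq \eps^p/2^{1-p/2}$ under the hypothesis $\eta_2\leq \eps^2/(2k)^{2/p-1}$, so $\|E\|_p^p \leq O(\eps^p)\|A(I-\hat Q')\|_p^p$ and the rest goes through unchanged. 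Your ``main technical obstacle'' paragraph correctly identifies that one needs the strong regression bound (error scaling with $\epsilon_R$, not $\sqrt{\epsilon_R}$); just replace $\eta_2/k$ by $\eta_2$ in the displayed inequality.
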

\begin{proof}
First, it is clear that the optimal solution to $\min_Y \norm{A - YZ^T}_p$ is $Y = AZ$, where the minimization is over all rank-$k$ $n\times k$ matrices $Y$.

Note that
\[
\hat Y = ARP_{\text{rowspace}(Z^TR)}
\]
is the minimizer to the Frobenius-norm minimization problem $\min_Y \| (A-YZ^T)R \|_F$. Since $R$ is a sparse embedding matrix of error $\Theta(\sqrt{\eta_2/k})$, one can show that (see, e.g., Lemma 7.8 of \cite{CW17}) with probability at least $0.99$,
\[
\norm{AZ - \hat Y}_F \leq \sqrt{\eta_2} \norm{A-AZZ^T}_F.
\]
It follows that
\begin{align*}
    \norm{A-\hat{Y} Z^T}_p &\leq \norm{A-AZZ^T}_p + \norm{\hat{Y}Z^T - AZZ^T}_p \\
					    &\leq (1+c_1\eps)\norm{A-AZZ^T}_p + \norm{\hat{Y} - AZ}_p \\
					    &\leq (1+c_1\eps)\norm{A-AZZ^T}_p + (2k)^{\frac1p-\frac12}\norm{\hat{Y} - AZ}_F \\
					    &\leq (1+c_1\eps)\norm{A\!-\!AZZ^T}_p + (2k)^{\frac1p-\frac12}\sqrt{\eta_2}\norm{A \!-\! AZZ^T}_F\\
					    &= (1+c_1\eps)\norm{A\!-\!AZZ^T}_p + (2k)^{\frac1p-\frac12}\sqrt{\eta_2} \norm{A \!-\! AZZ^T}_p\\
					    &\leq (1+(c_1+1)\eps)\norm{A-AZZ^T}_p
\end{align*}
provided that $\eta_2 \leq \eps^2/(2k)^{2/p-1}$. 
\end{proof}
\begin{remark}
The preceding lemma (Lemma~\ref{lem: Y hat}) may be of independent interest, as it solves Schatten $p$-norm regression efficiently, which has not been discussed in the literature in the context of dimensionality reduction before.
\end{remark}

In summary, we conclude the section with our main theorem.
\begin{theorem}\label{thm:p<2}
Let $p\in [1,2)$. Suppose that $A\in\R^{m\times n}$ with $m\geq n$. There is a randomized algorithm which outputs $Y\in\R^{m\times k}$ and $Z\in \R^{n\times k}$ such that $\hat X = Y Z^T$ satisfies \eqref{eqn:low_rank_approx} with probability at least $0.97$. The algorithm runs in time $O(\nnz(A)\log n) + \tilde{O}_p(mk^{2(\omega-1)/p}/\eps^{(4/p-1)(\omega-1)} + k^{2\omega/p}/\eps^{(4/p-1)(2\omega+2)})$.
\end{theorem}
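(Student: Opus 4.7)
I would begin by combining the preceding lemmas to obtain correctness, then trace through Algorithm~\ref{alg:outline} step by step for the running time; all the substantive analytic work has been done in Lemmas~\ref{lem:(p,r)-norm-preserved}--\ref{lem: Y hat}, so the remainder is largely bookkeeping.

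For correctness, I would first verify that the parameter choices $\eta_1 = \Theta((\eps^2/k)^{2/p})$ and $\eta_2 = \Theta(\eps^2/k^{2/p-1})$ set in Lines~1--4 are small enough to satisfy the hypotheses of Lemma~\ref{lem:hat Q is good} and Lemma~\ref{lem: Y hat} respectively. Setting $\hat Q' = ZZ^T$, Lemma~\ref{lem:hat Q is good} then gives $\norm{A - A\hat Q'}_p \leq (1+O(\eps))\norm{A-A_k}_p$. Since the exact minimizer $Y = AZ$ of $\min_{Y:\rk(Y)=k}\norm{A-YZ^T}_p$ attains value $\norm{A - A\hat Q'}_p$, Lemma~\ref{lem: Y hat} yields
\[
\norm{A - \hat Y Z^T}_p \leq (1+O(\eps))\norm{A - A\hat Q'}_p \leq (1+O(\eps))\norm{A-A_k}_p,
\]
which is the required inequality \eqref{eqn:low_rank_approx} after rescaling $\eps$ by a constant depending only on $p$. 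The $0.97$ success probability follows from a union bound over the three $0.99$-probability events of Lemmas~\ref{lem:add-mul_spec_approx}, \ref{lem:find Q hat}, and \ref{lem: Y hat}.

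For the running time, with the above $\eta_1$ the parameter $K = k + \eps/\eta_1$ of Lemma~\ref{lem:add-mul_spec_approx} equals $\Theta(k^{2/p}/\eps^{4/p-1})$, so producing $S$ (with $s = \tilde O(K/\eps^2)$ rows) takes time $O(\nnz(A)\log n) + \tilde O(mK^{\omega-1})$; substituting for $K$ gives $O(\nnz(A)\log n) + \tilde O(mk^{2(\omega-1)/p}/\eps^{(4/p-1)(\omega-1)})$, which accounts for the first two claimed terms. I would then invoke an oblivious subspace embedding $T$ of error $O(\eps)$ for $s$-dimensional subspaces of $\R^n$, so that $T$ has $t = \poly(K/\eps)$ columns and $SAT$ is an $s\times t$ matrix; the top-$k$ SVD of $SAT$ is the bottleneck among the remaining steps and, after carefully tracking how $s$ and $t$ depend on $K$ and $\eps$, contributes the third claimed term $\tilde O_p(k^{2\omega/p}/\eps^{(4/p-1)(2\omega+2)})$. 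Finally, constructing $Z$ by a QR factorization of a $k$-row matrix, building the sparse embedding $R$ with $\Theta(k^{2/p}/\eps^2)$ columns, computing $AR$ in $O(\nnz(A))$ time, and assembling $\hat Y$ via a $\poly(k/\eps)$-dimensional regression are all of lower order and absorbed into the earlier terms.

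The main obstacle is setting the dimensions of the embeddings $T$ and $R$ just tightly enough that (i) the Schatten-$p$ analyses in the earlier lemmas still apply and (ii) the SVD of $SAT$ does not exceed the claimed second nontrivial term; once these parameters are pinned down, every other piece of the running time is a routine substitution.
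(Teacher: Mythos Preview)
Your proposal is correct and follows essentially the same approach as the paper: correctness by chaining Lemmas~\ref{lem:hat Q is good} and~\ref{lem: Y hat} with a union bound, and the running time by substituting $K=\Theta(k^{2/p}/\eps^{4/p-1})$ into the $\tilde{O}(mK^{\omega-1})$ cost of Lemma~\ref{lem:add-mul_spec_approx} and the $\tilde{O}(s^\omega/\eps^2)$ cost of the SVD of $SAT$, with the $\hat Y$ step checked to be of lower order. The only cosmetic difference is that the paper instantiates $T$ via the leverage-score sampling of Lemma~\ref{lem:spec_approx} (giving $\tilde O(s/\eps^2)$ columns) rather than a generic oblivious subspace embedding, but this does not change the argument or the final bound.
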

\begin{proof}
The correctness of the output is clear from the preceding lemmata by rescaling $\eps$. We discuss the runtime below.

We first examine the runtime to obtain $Z$. For $\eta_1$ in Lemma~\ref{lem:hat Q is good} we have $k\eta_1 \leq \eps$ and thus $K = \Theta(\eps/\eta_1)$. Applying Lemma~\ref{lem:add-mul_spec_approx}, we have $s = \tilde{O}(K/\eps^2)$ and obtaining the matrix $S$ takes time $O(\nnz(A)\log n) + \tilde{O}(mK^{\omega-1})$. By Lemma~\ref{lem:spec_approx}, we can obtain $(SA)T$ in time $O(\nnz(SA)) + \tilde{O}(s^\omega/\eps^2)$ for a matrix $T$ of $\tilde{O}(s/\eps^2)$ columns and thus the subsequent SVD of $SAT$, by Lemma~\ref{lem:thin_svd}, takes $\tilde{O}(s^\omega/\eps^2)$. These three steps take time $\tilde{O}(mK^{\omega-1}) + O(\nnz(SA)) + \tilde{O}(s^\omega/\eps^2) = O(\nnz(A)) + \tilde{O}(mK^2  + K^\omega/\eps^{2\omega+2})$, where we used the fact that $S$ samples the rows of $A$ without replacement and so $\nnz(SA)\leq \nnz(A)$. Calculating the row span of $W'SA$, which is a $k$-by-$n$ matrix, takes $O(nk^{\omega-1})$ time. The total runtime is $O(\nnz(A)\log n) + \tilde{O}_p(mK^{\omega-1} + K^\omega/\eps^{2\omega+2})$. Plugging in $K= \eps/\eta_1 = \Theta(k^{2/p}/\eps^{4/p-1})$ yields the runtime $O(\nnz(A)\log n) + \tilde{O}(mk^{2(\omega-1)/p}/\eps^{(4/p-1)(\omega-1)} + k^{2\omega/p}/\eps^{(4/p-1)(2\omega+2)})$.

Next we examine the runtime to obtain $\hat Y$. Since $R$ has $t = \Theta(k/\eta_2)$ rows and $AR$ can be computed in $O(\nnz(A))$ time, $Z^TR$ can be computed in $O(nk)$ time, the row space of $Z^TR$ (which is a $k\times t$ matrix) in $O(k^{\omega-1}t) = \tilde O(k^\omega/\eta_2)$ time, and the final matrix product $(AR)P_{\text{rowspace}(Z^TR)}$ in $O(mt^{\omega-1}) = \tilde{O}(mk^{\omega-1}/\eta_2)$ time. Overall, computing $\hat Y$ takes time $O(\nnz(A)) + \tilde{O}(mk^{\omega-1}/\eta_2) = O(\nnz(A)) + \tilde{O}(mk^{\omega+2/p-2}/\eps^2)$.

The overall runtime follows immediately.
\end{proof}

\section{Case $p>2$}
The algorithm remains the same in Algorithm~\ref{alg:outline}. In this section we shall prove the correctness and analyse the runtime for constant $p>2$. The outline of the proof is the same and we shall only highlight the differences. Again we let $r=k/\eps$.

In place of Lemma~\ref{lem:(p,r)-norm-preserved}, we now have:
\begin{lemma}\label{lem:(p,r)-norm-preserved p>2}
Suppose that $p>2$ and $S$ satisfies \eqref{eqn:PCP_eta1}. It then holds for all rank-$k$ orthogonal projection $Q$ that
\begin{align*}
&\quad\ (1-K_p\eps)\norm{A(I-Q)}_{(p,r)}^p - C_{p/2,\eps}r\eta_1^{p/2}\norm{A-A_k}_F^p \\
&\leq \norm{SA(I-Q)}_{(p,r)}^p \\
&\leq (1+K_p\eps)\norm{A(I-Q)}_{(p,r)}^p + C_{p/2,\eps}r\eta_1^{p/2}\norm{A-A_k}_F^p,
\end{align*}
where $K_p\geq 1$ is some constant that depends only on $p$.
\end{lemma}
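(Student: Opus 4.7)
The starting point is the same per-index singular value inequality \eqref{eqn:key_inequality}, which follows from $S$ being a projection-cost preserving sketch and thus applies to any rank-$k$ orthogonal projection $Q$:
\[
(1-\eps)\sigma_i^2(A(I-Q)) - \eta_1\norm{A-A_k}_F^2 \leq \sigma_i^2(SA(I-Q)) \leq (1+\eps)\sigma_i^2(A(I-Q)) + \eta_1\norm{A-A_k}_F^2.
\]
In the $p<2$ case one could trade the Frobenius slack for a Schatten-$p$ slack since $\|\cdot\|_F\leq\|\cdot\|_p$; for $p>2$ the inequality reverses, so the plan is to keep $\|A-A_k\|_F$ on the right-hand side (hence the form of the conclusion) and simply raise \eqref{eqn:key_inequality} to the $p/2$ power. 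Since $p/2>1$, this is where Lemma~\ref{lem:elementary} enters.

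Write $a_i=\sigma_i^2(A(I-Q))$ and $b=\eta_1\|A-A_k\|_F^2$. For the upper bound I would establish, for all $a_i,b\geq 0$, the pointwise inequality
\[
\bigl((1+\eps)a_i+b\bigr)^{p/2}\leq (1+K_p\eps)\,a_i^{p/2}+C_{p/2,\eps}\,b^{p/2}.
\]
This I would prove by a short case analysis on the ratio $x=b/((1+\eps)a_i)$. When $x\leq\eps$, the expression is $((1+\eps)a_i)^{p/2}(1+x)^{p/2}\leq (1+\eps)^{p}a_i^{p/2}\leq(1+K_p\eps)a_i^{p/2}$ using $(1+\eps)^{p}\leq 1+K_p\eps$ for constant $p$ and $\eps\leq 1/2$. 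When $x\in[\eps,1]$, Lemma~\ref{lem:elementary} with exponent $q=p/2$ yields $(1+x)^{p/2}\leq 1+C_{p/2,\eps}x^{p/2}$, and multiplying by $((1+\eps)a_i)^{p/2}$ and using $((1+\eps)a_i)^{p/2}x^{p/2}=b^{p/2}$ gives the claim. When $x>1$, we simply bound $(1+\eps)a_i+b\leq 2b$, so the left side is at most $2^{p/2}b^{p/2}$, which is absorbed into $C_{p/2,\eps}b^{p/2}$ (since $C_{p/2,\eps}\geq 2^{p/2}$ for $\eps\leq 1/2$ and $p>2$, using the explicit form $C_{p/2,\eps}=(p/2)(1+1/\eps)^{p/2-1}$). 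Applying this to the upper bound of \eqref{eqn:key_inequality} and summing over $i=1,\dots,r$ gives the stated upper bound of the lemma.

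For the lower bound, the only subtlety is that $(1-\eps)a_i-b$ may be negative. If $(1-\eps)a_i\leq b$, then $a_i^{p/2}\leq b^{p/2}/(1-\eps)^{p/2}$, so $(1-K_p\eps)a_i^{p/2}-C_{p/2,\eps}b^{p/2}\leq 0\leq \sigma_i^p(SA(I-Q))$, and the bound is trivial. Otherwise, let $y=b/((1-\eps)a_i)\in[0,1)$, so $\sigma_i^p(SA(I-Q))\geq ((1-\eps)a_i)^{p/2}(1-y)^{p/2}$. For $y\leq\eps$, bound $(1-y)^{p/2}\geq(1-\eps)^{p/2}$ to obtain $(1-\eps)^p a_i^{p/2}\geq(1-K_p\eps)a_i^{p/2}$; for $y\in[\eps,1)$, Lemma~\ref{lem:elementary} gives $(1-y)^{p/2}\geq 1-C_{p/2,\eps}y^{p/2}$, and multiplying through as before produces the lower bound $(1-K_p\eps)a_i^{p/2}-C_{p/2,\eps}b^{p/2}$. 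Summing over $i=1,\dots,r$ yields the claimed lower bound.

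The principal obstacle is the second case of each analysis, where the additive Frobenius slack is genuinely comparable to the multiplicative target term; here the naive inequality $(u+v)^{p/2}\leq 2^{p/2-1}(u^{p/2}+v^{p/2})$ would lose a constant factor in front of $a_i^{p/2}$, which is unacceptable, and it is precisely Lemma~\ref{lem:elementary} that trades this constant loss for a large coefficient on the (additive) $b^{p/2}$ term while preserving the $(1+K_p\eps)$ factor on the (multiplicative) $a_i^{p/2}$ term. The constant $K_p$ depends only on $p$ through $(1+\eps)^p\leq 1+K_p\eps$ on $\eps\in(0,1/2]$.
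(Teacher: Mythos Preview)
Your approach is essentially the paper's: start from \eqref{eqn:key_inequality}, split into cases according to the relative size of $a_i$ and $b$, and invoke Lemma~\ref{lem:elementary} in the intermediate range. The paper's proof is terser (it only writes two cases, thresholding at $a_i \gtrless b/\eps$), but the content is the same; your three-case split is in fact a more honest rendering of what is needed.

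There is one small slip in your third case. The claim that $C_{p/2,\eps}\geq 2^{p/2}$ for all $p>2$ and $\eps\leq 1/2$ is false: for instance at $p=3$, $\eps=1/2$ one has $C_{3/2,1/2}=\tfrac32\sqrt{3}\approx 2.60<2^{3/2}\approx 2.83$. The fix is painless: the function $x\mapsto ((1+x)^{p/2}-1)/x^{p/2}$ is decreasing on $(0,\infty)$ (differentiate and use $p/2>1$), so the inequality $(1+x)^{p/2}\leq 1+C_{p/2,\eps}x^{p/2}$ from Lemma~\ref{lem:elementary} actually holds for all $x\geq \eps$, not just $x\in[\eps,1]$. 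This lets you merge your second and third cases and avoids the incorrect constant comparison. With this adjustment the argument goes through exactly as you wrote it.
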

\begin{proof}
We now have two cases based on~\eqref{eqn:key_inequality}.
\begin{itemize}
	\item When $\sigma_i(A(I-Q))^2 \geq (1/\eps)\eta_1\|A-A_k\|_F^2$, we have 
	\[
		(1-O_p(\eps))\sigma_i^p(A(I-Q)) \leq \sigma_i^p(SA(I-Q)) \\
		\leq (1+O_p(\eps))\sigma_i^p(A(I-Q))
	\]
	\item When $\sigma_i(A(I-Q))^2 < (1/\eps)\eta_1\|A-A_k\|_F^2$, we have from Lemma~\ref{lem:elementary} that
	\begin{align*}
	&\quad\ (1-\eps)\sigma_i^p(A(I-Q)) - C_{p/2,\eps}\eta_1^{p/2} \norm{A-A_k}_F^p \\
	&\leq \sigma_i^p(SA(I-Q)) \\
	&\leq (1+\eps)\sigma_i^p(A(I-Q)) + C_{p/2,\eps}\eta_1^{p/2} \norm{A-A_k}_F^p.
	\end{align*}
\end{itemize}
The claimed result follows in the same manner as in the proof of Lemma~\ref{lem:(p,r)-norm-preserved}.
\end{proof}

The analogy of Lemma~\ref{lem:hat Q is good} is the following, where we apply H\"older's inequality that $\|A-A_k\|_F\leq n^{1/2-1/p}\|A-A_k\|_p$.
\begin{lemma}\label{lem:hat Q is good p > 2}
Let $\eps \in (0,1/2]$. Suppose that $\hat Q'$ satisfies \eqref{eqn:hat Q'}. Then it holds for some constants $c_p, c_p' > 0$ which depend only on $p$ that
\[
\norm{A(I-\hat Q')}_p^p \leq (1+c_p\eps)\norm{A-A_k}_p^p,
\]
whenever $\eta_1 \leq c_p'\eps^{1+2/p}/(k^{2/p} n^{1-2/p})$.
\end{lemma}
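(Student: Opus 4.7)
The plan is to mirror the proof of Lemma~\ref{lem:hat Q is good} (the $p<2$ case), with two modifications: replacing Lemma~\ref{lem:(p,r)-norm-preserved} by Lemma~\ref{lem:(p,r)-norm-preserved p>2}, and converting the additive Frobenius error into Schatten $p$-norm error via H\"older's inequality.

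First I would establish the analog of Lemma~\ref{lem:projection-away for A}: chaining Lemma~\ref{lem:(p,r)-norm-preserved p>2}, Lemma~\ref{lem:find Q hat}, the optimality of $\hat Q = \argmin_Q \|SA(I-Q)\|_{(p,r)}$, and a second application of Lemma~\ref{lem:(p,r)-norm-preserved p>2} to the rank-$k$ projection $Q^\ast$ onto the top $k$ right singular vectors of $A$, exactly as in the $p<2$ case, to obtain
\[
\norm{A(I-\hat Q')}_{(p,r)}^p \le (1+c_p\eps)\norm{A-A_k}_{(p,r)}^p + c_p'' \, k\, \eta_1^{p/2}\, C_{p/2,\eps}\, \norm{A-A_k}_F^p,
\]
where the only formal difference from the $p<2$ derivation is that $C_{p/2,\eps}=\Theta_p(\eps^{1-p/2})$ now appears in the additive term and that the tail norm is Frobenius rather than Schatten $p$.

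Second, I would invoke H\"older's inequality. Since $p>2$, for any matrix $M$ of rank at most $n$ we have $\norm{M}_F \le n^{1/2-1/p}\norm{M}_p$, so $\norm{A-A_k}_F^p \le n^{p/2-1}\norm{A-A_k}_p^p$. Substituting, the additive term becomes $c_p''\, k\,\eta_1^{p/2}\, C_{p/2,\eps}\, n^{p/2-1}\norm{A-A_k}_p^p$.

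Third, I would convert from the $(p,r)$-norm back to the full Schatten $p$-norm exactly as in Lemma~\ref{lem:hat Q is good}: split $\norm{A(I-\hat Q')}_p^p$ into its head $\norm{A(I-\hat Q')}_{(p,r)}^p$ and tail $\sum_{i\ge r+1}\sigma_i^p(A(I-\hat Q'))$, bound the tail by $\sum_{i\ge r+1}\sigma_i^p(A)$, and use $r=k/\eps$ together with the averaging bound $\sigma_{r+1}^p(A)\le \frac{1}{r-k}\norm{A-A_k}_p^p$ to absorb the extra $\sum_{i=r+1}^{k+r}\sigma_i^p(A)$ into $O(\eps)\norm{A-A_k}_p^p$. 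This yields
\[
\norm{A(I-\hat Q')}_p^p \le (1+O_p(\eps))\norm{A-A_k}_p^p + c_p''\, k\, \eta_1^{p/2}\, C_{p/2,\eps}\, n^{p/2-1}\,\norm{A-A_k}_p^p.
\]
To make the second term at most $\eps\norm{A-A_k}_p^p$ I would solve $k\eta_1^{p/2} \cdot \eps^{1-p/2} \cdot n^{p/2-1} \lesssim \eps$, giving $\eta_1^{p/2}\lesssim \eps^{p/2}/(k\,n^{p/2-1})\cdot \eps$, i.e., $\eta_1 \le c_p'\,\eps^{1+2/p}/(k^{2/p}n^{1-2/p})$, which is the stated condition. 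The ``extra'' $\eps$ in the numerator (compared to the naive $\eps^{2/p}$) is exactly the gain absorbed from the $C_{p/2,\eps}\approx\eps^{1-p/2}$ factor of Lemma~\ref{lem:elementary}.

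The main obstacle is bookkeeping: carefully tracking which constants depend on $p$ alone versus on $\eps$, and verifying that the $\eps^{1-p/2}$ growth of $C_{p/2,\eps}$ combines with $r=k/\eps$ and the H\"older factor $n^{p/2-1}$ to produce exactly the exponent $1+2/p$ of $\eps$ claimed in the hypothesis on $\eta_1$. No new conceptual step is required beyond those already used in the $p<2$ analysis and the single invocation of H\"older.
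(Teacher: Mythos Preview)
Your proposal is correct and follows exactly the paper's approach: the paper's proof simply states ``Similarly we have $\norm{A(I-\hat Q')}_p^p \le (1+c_p\eps)\norm{A-A_k}_p^p + r\,C_{p/2,\eps}\,\eta_1^{p/2}\,n^{p/2-1}\norm{A-A_k}_p^p$'' and then solves $C_{p/2,\eps}\,\eta_1^{p/2}\,n^{p/2-1}\le \eps/r=\eps^2/k$ for $\eta_1$, which is precisely your three-step outline with H\"older applied to the Frobenius tail. One bookkeeping slip: the additive term coming from Lemma~\ref{lem:(p,r)-norm-preserved p>2} carries the factor $r$, not $k$, and it is that extra $1/\eps$ (from $r=k/\eps$) that produces the ``$\cdot\,\eps$'' you inserted in your final line---with the $k$ you wrote, the inequality $k\eta_1^{p/2}\eps^{1-p/2}n^{p/2-1}\lesssim\eps$ only gives $\eta_1^{p/2}\lesssim \eps^{p/2}/(k\,n^{p/2-1})$ without that factor.
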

\begin{proof}
Similarly we have
\[
\norm{A(I-\hat Q')}_p^p \leq (1+c_p\eps)\norm{A-A_k}_p^p 
 + r C_{p/2,\eps}\eta_1^{p/2} n^{\frac{p}{2}-1}\norm{A-A_k}_p^p.
\]
The conclusion follows when 
\[
C_{p/2,\eps}\eta_1^{p/2} n^{p/2-1}\leq \frac{\eps}{r} = \frac{\eps^2}{k},
\]
that is,
\[
\left(\frac{p}{2}\left(1+\frac{1}{\eps}\right)^{\frac{p}{2}-1}\right)\eta_1^{\frac{p}{2}}n^{\frac{p}{2}-1}\leq \frac{\eps^2}{k}.\qedhere
\]
\end{proof}

The analogy of Lemma~\ref{lem: Y hat} is the following.
\begin{lemma}\label{lem:Y hat p > 2}
When $p>2$ is a constant, the matrix $\hat Y$ defined in Line~\ref{alg:line Y hat} of Algorithm~\ref{alg:outline} satisfies with probability at least $0.9$ that
\[
\norm{A-\hat{Y} Z^T}_p \leq (1+c_p\eps)\min_{Y: \rk(Y)= k} \norm{A-YZ^T}_p,
\]
for some constant that depends only on $p$, whenever $\eta_2 \leq \eps^2/n^{1-2/p}$.
\end{lemma}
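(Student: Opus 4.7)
The approach mirrors the proof of Lemma~\ref{lem: Y hat}; only the Schatten--Frobenius interpolation step changes direction. As before, the unconstrained minimizer of $\min_Y \norm{A-YZ^T}_p$ over $m\times k$ matrices $Y$ is $Y=AZ$, since $Z$ has orthonormal columns. Moreover, $\hat Y = ARP_{\text{rowspace}(Z^TR)}$ is by construction the minimizer of the \emph{Frobenius} sketched regression $\min_Y \norm{(A-YZ^T)R}_F$. Because $R$ is a sparse embedding matrix with distortion $\Theta(\sqrt{\eta_2/k})$, Lemma~7.8 of \cite{CW17} yields, with probability at least $0.99$,
\[
\norm{\hat Y - AZ}_F \leq \sqrt{\eta_2}\,\norm{A - AZZ^T}_F.
\]

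I would then run the triangle inequality exactly as in the $p<2$ case, but replace the inequality $\norm{M}_p\le(2k)^{1/p-1/2}\norm{M}_F$ (used there because $\hat Y - AZ$ has rank at most $2k$) by the much simpler Schatten monotonicity $\norm{M}_p \leq \norm{M}_F$, which is valid for $p\geq 2$ on any matrix. Concretely, using $\norm{(\hat Y - AZ)Z^T}_p = \norm{\hat Y - AZ}_p$ (since $Z$ has orthonormal columns),
\begin{align*}
\norm{A - \hat Y Z^T}_p
&\leq \norm{A - AZZ^T}_p + \norm{\hat Y - AZ}_p \\
&\leq \norm{A - AZZ^T}_p + \norm{\hat Y - AZ}_F \\
&\leq \norm{A - AZZ^T}_p + \sqrt{\eta_2}\,\norm{A - AZZ^T}_F.
\end{align*}

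To convert the trailing Frobenius term back to a Schatten $p$-norm I apply H\"older's inequality to the singular values: for any matrix $M$ of rank at most $\rho$ with $p\geq 2$, $\norm{M}_F \leq \rho^{1/2-1/p}\norm{M}_p$. Since $A - AZZ^T = A(I-ZZ^T)$ has rank at most $n$, this gives $\norm{A-AZZ^T}_F \leq n^{1/2-1/p}\norm{A-AZZ^T}_p$, and therefore
\[
\norm{A - \hat Y Z^T}_p \leq \bigl(1 + \sqrt{\eta_2}\,n^{1/2-1/p}\bigr)\norm{A - AZZ^T}_p.
\]
Imposing $\eta_2 \leq \eps^2/n^{1-2/p}$ makes the bracketed factor $1+\eps$, matching the claim (after rescaling $\eps$ to absorb the constant $c_p$). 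The main thing to be careful about is just the direction of the Schatten--Frobenius interpolation: for $p>2$ the numerator step is essentially free (no rank factor), but the cost is paid in the second step, where $A-AZZ^T$ can have full rank $n$, forcing the $n^{1-2/p}$ factor in the threshold for $\eta_2$. No new machinery beyond the already-established sketching guarantee for $\hat Y$ is needed.
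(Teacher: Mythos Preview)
Your proposal is correct and follows essentially the same route as the paper's own proof: use the sketching guarantee $\norm{\hat Y-AZ}_F\leq\sqrt{\eta_2}\,\norm{A-AZZ^T}_F$, bound $\norm{\hat Y-AZ}_p\leq\norm{\hat Y-AZ}_F$ via Schatten monotonicity for $p\geq 2$, and then convert $\norm{A-AZZ^T}_F\leq n^{1/2-1/p}\norm{A-AZZ^T}_p$ by H\"older. Your write-up is in fact slightly cleaner, as the paper carries an extraneous $(1+c_p\eps)$ factor on the first term after the triangle inequality.
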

\begin{proof}
The proof is similar to that of Lemma~\ref{lem: Y hat} except that we have instead in the last part of the argument that
\begin{align*}
\norm{A - \hat Y Z}_p &\leq (1+c_p\eps)\norm{A - AZZ^T}_p + \norm{\hat Y-AZ}_p \\
										 & \leq (1+c_p\eps)\norm{A - AZZ^T}_p + \norm{\hat Y-AZ}_F \\
										 & \leq (1+c_p\eps)\norm{A - AZZ^T}_p + \sqrt{\eta_2}\norm{A-AZZ^T}_F \\
										 & \leq (1+c_p\eps)\norm{A - AZZ^T}_p + \sqrt{\eta_2} n^{\frac12-\frac1p}\norm{A-AZZ^T}_p 
\end{align*}
and we would need $\eta_2\leq \eps^2/n^{1-\frac2p}$. 
\end{proof}

In summary, we have the following main theorem.
\begin{theorem}\label{thm:p>2}
Let $p > 2$ be a constant. Suppose that $A\in\R^{m\times n}$ ($m\geq n$). There is a randomized algorithm which outputs $Y\in\R^{m\times k}$ and $Z\in \R^{n\times k}$ such that $\hat X = Y Z^T$ satisfies \eqref{eqn:low_rank_approx} with probability at least $0.97$. The algorithm runs in time $O(\nnz(A)\log n) +  \tilde{O}_p(n^{\omega(1-2/p)}k^{2\omega/p}/\eps^{2\omega/p+2} + mn^{(\omega-1)(1-2/p)}(k/\eps)^{2(\omega-1)/p})$. 
\end{theorem}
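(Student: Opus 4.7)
The plan is to mirror the proof of Theorem~\ref{thm:p<2}, since Algorithm~\ref{alg:outline} is unchanged and only the parameter settings for $\eta_1,\eta_2$ differ. For correctness, I chain the three lemmas of this section. Lemma~\ref{lem:find Q hat} applies for any $p$ and yields the projection $\hat Q'=ZZ^T$ with $\|SA(I-\hat Q')\|_{(p,r)} \leq (1+\eps)\min_Q\|SA(I-Q)\|_{(p,r)}$. With the choice $\eta_1 = \Theta_p(\eps^{1+2/p}/(k^{2/p}n^{1-2/p}))$ used in Algorithm~\ref{alg:outline}, Lemma~\ref{lem:hat Q is good p > 2} then gives $\|A(I-\hat Q')\|_p^p \leq (1+c_p\eps)\|A-A_k\|_p^p$. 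Finally, with $\eta_2 = \Theta(\eps^2/n^{1-2/p})$, Lemma~\ref{lem:Y hat p > 2} converts this into $\|A-\hat Y Z^T\|_p \leq (1+c_p'\eps)\|A-A_k\|_p$. A $p$-dependent rescaling of $\eps$ produces \eqref{eqn:low_rank_approx}, and the $0.97$ success probability follows by a union bound over the constant-probability events in \eqref{eqn:PCP_eta1}, Lemma~\ref{lem:find Q hat}, and the sparse embedding step used in Lemma~\ref{lem:Y hat p > 2}.

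For the runtime, I re-run the accounting from the proof of Theorem~\ref{thm:p<2} with the new values of $\eta_1$ and $\eta_2$. Setting $K = k + \eps/\eta_1 = \Theta_p(k^{2/p}n^{1-2/p}/\eps^{2/p})$ and $t = \Theta(k/\eta_2) = \Theta(k n^{1-2/p}/\eps^2)$, Lemma~\ref{lem:add-mul_spec_approx} produces $S$ in $O(\nnz(A)\log n) + \tilde O(mK^{\omega-1})$ time; substituting $K$ gives $m\,n^{(1-2/p)(\omega-1)}(k/\eps)^{2(\omega-1)/p}$, which matches the second term of the claimed runtime. The subspace embedding $T$ followed by the SVD of $SAT$ contributes $\tilde O_p(K^\omega/\poly(\eps))$, which after substitution becomes $\tilde O_p(n^{\omega(1-2/p)}k^{2\omega/p}/\eps^{O(1)})$, matching the first term. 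The remaining operations — extracting the row span of $W'SA$, computing $AR$ in $O(\nnz(A))$ time, computing the row space of $Z^TR$, and the final multiplication $(AR)P_{\text{rowspace}(Z^TR)}$ in $\tilde O(mt^{\omega-1})$ time — are dominated by the two leading terms.

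The main bookkeeping obstacle is exponent-tracking in $\eps$: both $K$ and $t$ carry inverse powers of $\eps$, and one must combine them with the $1/\eps^2$ loss from Lemma~\ref{lem:add-mul_spec_approx} and the SVD cost to arrive at the stated exponents without dropping factors. Conceptually, the only substantive departure from the $p<2$ analysis is that $\eta_1$ and $\eta_2$ acquire explicit $n^{1-2/p}$ factors; this is forced by the Hölder step $\|A-A_k\|_F \leq n^{1/2-1/p}\|A-A_k\|_p$ that must be invoked (inside Lemmata~\ref{lem:hat Q is good p > 2} and~\ref{lem:Y hat p > 2}) to convert the Frobenius-norm additive error produced by the projection-cost-preserving sketch into a Schatten $p$-norm error, and it is precisely this dependence on $n$ that makes the runtime polynomial in $n$ rather than logarithmic as in Theorem~\ref{thm:p<2}.
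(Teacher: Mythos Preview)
Your proposal is correct and follows essentially the same approach as the paper: correctness is deferred to Lemmata~\ref{lem:find Q hat}, \ref{lem:hat Q is good p > 2}, and~\ref{lem:Y hat p > 2} with a rescaling of $\eps$ and a union bound, and the runtime is obtained by re-running the accounting of Theorem~\ref{thm:p<2} with $K=\Theta(\eps/\eta_1)=\Theta(n^{1-2/p}(k/\eps)^{2/p})$ and the new $\eta_2$. The only minor difference is that for the final product $(AR)P_{\text{rowspace}(Z^TR)}$ the paper uses the tighter bound $\tilde O(mk^{\omega-1}/\eta_2)$ (exploiting that $P$ has only $k$ columns) rather than your $\tilde O(mt^{\omega-1})$, which is what allows that term to be absorbed into the stated runtime.
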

\begin{proof}
The correctness follows from the previous lemmata as in the proof of Theorem~\ref{thm:p<2}. Below we discuss the running time.

First we examine the time required to obtain $Z$. It is easy to verify that $k\eta_1\leq \eps$ and so $K = \Theta(\eps/\eta_1)$. Similar to the analysis in Theorem~\ref{thm:p<2}, we have the total runtime $O(\nnz(A)\log n) + \tilde{O}_p(mK^{\omega-1} + K^\omega/\eps^{2\omega+2}))$. Note that $K=\Theta(\eps/\eta_1) = \Theta(n^{1-2/p}(k/\eps)^{2/p})$, so the runtime becomes $O(\nnz(A)\log n) + \tilde{O}_p(mn^{(\omega-1)(1-2/p)}(k/\eps)^{2(\omega-1)/p} + n^{\omega(1-2/p)}k^{2\omega/p}/\eps^{2\omega/p+2})$.

Next we examine the time required to obtain $\hat Y$. Again similarly the runtime is $O(\nnz(A)) + \tilde{O}(mk^{\omega-1}/\eta_2) = O(\nnz(A)) + \tilde{O}(mn^{1-2/p}k^{\omega-1}/\eps^2)$.

Combining the two runtimes above yields the overall runtime.
\end{proof}

\section{Experiments}\label{sec:experiments}

The contribution of our work is primarily theoretical: an input sparsity time algorithm for low-rank approximation for any Schatten $p$-norm. In this section, nevertheless, we give an empirical verification of the advantage of our algorithm on both synthetic and real-world data. We focus on the most important case of the nuclear norm, i.e., $p=1$. 

In addition to the solution provided by our algorithm, we also consider a natural candidate for a low-rank approximation algorithm, which is the solution in Frobenius norm, that is, a rank-$k$ matrix $X$ for which $\|A-X\|_F \leq (1+\eps)\|A-A_k\|_F$. This problem admits a simple solution as follows. Take $S$ to be a \text{Count-Sketch} matrix and let $Z$ be an $n\times k$ matrix whose columns form an orthonormal basis of the top-$k$ right singular vectors of $SA$. Then $X=AZZ^T$ is a Frobenius-norm solution with high probability~\cite{CEMMP15}.

We shall compare the quality (i.e., approximation ratio measured in Schatten $1$-norm) of both solutions and the running times\footnote{All tests are run under MATLAB 2019b on a machine of Intel Core i7-6550U CPU@2.20GHz with 2 cores.}.

\paragraph{Synthetic Data.} 
We adopt a simpler version of Algorithm~\ref{alg:outline} by taking $S$ to be a \textsc{Count-Sketch} matrix of target dimension $k^2$ and both $R$ and $T$ to be identity matrices of appropriate dimension. For the Frobenius-norm solution, we also take $S$ to be a \textsc{Count-Sketch} matrix of target dimension $k^2$. We choose $n=3000$ and generate a random $n\times n$ matrix $A$ of independent entries, each of which is uniform in $[0,1]$ with probability $0.05$ and $0$ with probability $0.95$. Since the regime of interest is $k\ll n$, we vary $k$ among $\{5,10,20\}$. For each value of $k$, we run our algorithm $50$ times and record the relative approximation error $\eps_1 = \|A-YZ^T\|_1/\|A-A_k\|_1 - 1$ with the running time and the relative approximation error of the Frobenius-norm solution $\eps_2 = \|A-X\|_1/\|A-A_k\|_1 - 1$ with the running time. The same matrix $A$ is used for all tests. In Table~\ref{tab:comparison} we report the median of $\eps_1$, the median of $\eps_2$, the median running time of both algorithms, among $50$ independent runs for each $k$, and the median running time of a full SVD of $A$ among $10$ runs.
\begin{table}
\caption{Performance of our algorithm on synthetic data compared with the approximate Frobenius-norm solution and the SVD.}\label{tab:comparison}
\centering
\begin{tabular}{|c|c|c|c|}
\hline
& $k=5$ & $k=10$ & $k=20$ \\
\hline
median of $\eps_1$ & 0.00372 & 0.00377 & 0.00486 \\
\hline
median of $\eps_2$ & 0.00412 & 0.00485 & 0.00637 \\
\hline
\hline
median runtime of & \multirow{2}{*}{0.067s} & \multirow{2}{*}{0.196s} & \multirow{2}{*}{0.428s} \\
$\|\cdot\|_1$ algorithm & & & \\
\hline
median runtime of & \multirow{2}{*}{0.044s} & \multirow{2}{*}{0.073s} & \multirow{2}{*}{0.191s} \\
$\|\cdot\|_F$ algorithm & & & \\
\hline
median runtime of SVD & \multicolumn{3}{c|}{5.788s}\\
\hline
\end{tabular}
\end{table}

We can observe that our algorithm achieves a good (relative) approximation error, which is less than $0.005$ in all such cases of $k$. Our algorithm also outperforms the approximate Frobenius-norm solution by 10\%--30\% in terms of approximation error. Our algorithm also runs about 13-fold faster than a regular SVD.

\paragraph{KOS data.} For real-world data, we use a word frequency dataset, named KOS, from UC Irvine.\footnote{\url{https://archive.ics.uci.edu/ml/datasets/Bag+of+Words}} The matrix represents word frequencies in blogs and has dimension $3430\times 6906$ with $353160$ non-zero entries. Again we report the median relative approximation error and the median running time of our algorithm and those of the Frobenius-norm algorithm among $50$ independent runs for each value of $k\in \{5,10,20\}$. The results are shown in Table~\ref{tab:KOS}.

Our algorithm achieves a good approximation error, less than $0.015$, and surpasses the approximate Frobenius-norm solution for all such values of $k$. The gap between two solutions in the approximation error widens as $k$ increases. When $k=10$, our algorithm outperforms the approximate Frobenius-norm by $30\%$; when $k=20$, this increases to almost $50\%$. Our algorithm, although stably slower than the Frobenius norm algorithm by 30\%--40\%, still displays a 14.5-fold speed-up compared with the regular SVD.

\begin{table}
\caption{Performance of our algorithm on KOS data compared with the approximate Frobenius-norm solution and the SVD.}\label{tab:KOS}
\centering
\begin{tabular}{|c|c|c|c|}
\hline
& $k=5$ & $k=10$ & $k=20$ \\
\hline
median of $\eps_1$ & 0.0149 & 0.0145 & 0.0132 \\
\hline
median of $\eps_2$ & 0.0183 & 0.0216 & 0.0259 \\
\hline
\hline
median runtime of & \multirow{2}{*}{0.155s} & \multirow{2}{*}{0.204s} & \multirow{2}{*}{0.323s} \\
$\|\cdot\|_1$ algorithm & & & \\
\hline
median runtime of & \multirow{2}{*}{0.113s} & \multirow{2}{*}{0.154s} & \multirow{2}{*}{0.242s} \\
$\|\cdot\|_F$ algorithm & & & \\
\hline
median runtime of SVD & \multicolumn{3}{c|}{4.999s}\\
\hline
\end{tabular}
\end{table}

\section{Generalization} \label{sec:generalization}

More generally, one can ask to solve the problem of low-rank approximation with respect to some function $\Phi$ on the matrix singular values, i.e.,
\begin{equation}\label{eqn:general_low_rank_approx}
\min_{X: \rk(X) = k} \Phi(A-X)
\end{equation}
Here we consider $\Phi(A) = \sum_i \phi(\sigma_i(A))$ for some increasing function $\phi:[0,\infty)\to [0,\infty)$. It is clear that $\Phi$ is rotationally invariant and that $\Phi(A)\geq \Phi(B)$ if $\sigma_i(A)\geq \sigma_i(B)$ for all $i$. These two properties allow us to conclude that $A_k$ remains an optimal solution for such general $\Phi$.

We further assume that $\phi$ satisfies the following conditions.
\begin{enumerate}[label=(\alph*)]
	\item there exists $\alpha>0$ such that $\phi((1+\eps)x) \leq (1+\alpha\eps) \phi(x)$ and $\phi((1-\eps)x) \geq (1-\alpha\eps) \phi(x)$ 	for all sufficiently small $\eps$.
	\item it holds that for each sufficiently small $\eps$,
	\[
	K^1_{\phi,\eps} = \sup_{x > 0}\sup_{y \in [\eps x, x]} (\phi(x+y)-\phi(x))/\phi(y) < \infty
	\]
	and
	\[
	K^2_{\phi,\eps} = \sup_{x > 0}\sup_{y \in [\eps x, x]} (\phi(x)-\phi(x-y))/\phi(y) < \infty.
	\]
	\item it holds that for each sufficiently small $\eps$,
	\[
	L_{\phi,\eps} = \sup_{x > 0}  \phi(\eps x)/\phi(x)  < \infty.
	\]
	\item there exists $\gamma>0$ such that $\phi(x+y)\leq \gamma(\phi(x)+\phi(y))$.
\end{enumerate}
When the function $\phi$ is clear from the text, we also abbreviate $K^i_{\phi,\eps}$ and $L_{\phi,\eps}$ as $K^i_\eps$ and $L_\eps$, respectively. Let $K_\eps = \max\{K^1_\eps, K^2_\eps\}$.

It follows from a similar argument to Lemma~\ref{lem:(p,r)-norm-preserved p>2} and Conditions (a)--(c) that
\begin{align*}
&\quad\	(1-\alpha\eps) \phi(\sigma_i(A(I-Q))) - L_{\sqrt{\eta_1}} K_\eps \phi(\|A-A_k\|_F) \\
	&\leq \phi(\sigma_i(SA(I-Q))) \\
	&\leq (1+\alpha\eps) \phi(\sigma_i(A(I-Q))) + L_{\sqrt{\eta_1}} K_\eps \phi(\|A-A_k\|_F)
\end{align*}
Note Condition (c) implies that
$
\phi\big(\sqrt{\sum_i x_i^2}\big) \leq \phi\big(\sum_i x_i\big)\leq \gamma\sum_i \phi(x_i)
$,
which further implies that
$
\phi(\|A-A_k\|_F) \leq \gamma\Phi(A-A_k)
$.
Therefore
\begin{align*}
&\quad\	(1-\alpha\eps) \phi(\sigma_i(A(I-Q))) - \gamma L_{\sqrt{\eta_1}} K_\eps \Phi(A-A_k) \\
	&\leq \phi(\sigma_i(SA(I-Q))) \\
	&\leq (1+\alpha\eps) \phi(\sigma_i(A(I-Q))) + \gamma L_{\sqrt{\eta_1}} K_\eps \Phi(A-A_k)
\end{align*}

Analogously to the singular $(p,r)$-norm, we define $\Phi_r(A) = \sum_{i=1}^r \phi(\sigma_i(A))$. It is easy to verify that the argument of Lemmata~\ref{lem:find Q hat} to~\ref{lem:hat Q is good} will go through with minimal changes, yielding that
\[
\Phi_k(A(I-\hat Q')) \leq (1+c_1\eps)\Phi_k(A-A_k) + c_2 r\Phi(A-A_k)
\]
for some constants $c_1,c_2 > 0$ that depend on $\alpha,\gamma,K_\eps,L_\eps$. When $\eta_1 \leq c_3(\eps/r)^{1/\alpha}$ we have
\[
\Phi(A(I-\hat Q')) \leq (1+c_4\eps)\Phi_k(A-A_k).
\]
We can then output $AZ$ and $Z$ in time $O(\nnz(A)\cdot k + nk)$. Performing a similar analysis on the running time as before, we arrive at the following theorem.

\begin{theorem}
Suppose that $\phi:[0,\infty)\to[0,\infty)$ is increasing and satisfies Conditions (a)--(d) and that $K_\eps = \poly(1/\eps)$ and $L_\eps=\poly(1/\eps)$. Let $A\in\R^{n\times n}$. There is a randomized algorithm which outputs matrices $Y, Z\in \R^{n\times k}$ such that $X = YZ^T$ satisfies \eqref{eqn:general_low_rank_approx} with probability at least $0.98$. The algorithm runs in time $O(\nnz(A)(k+\log n)) + \tilde{O}(n\poly(k/\eps))$, where the hidden constants depend on $\alpha,\gamma$ and the polynomial exponents for $K_\eps$ and $L_\eps$.
\end{theorem}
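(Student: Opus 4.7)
The plan is to reuse Algorithm~\ref{alg:outline} almost verbatim, with $\eta_1$ set to a suitable $\poly(\eps/k)$ depending on $\alpha,\gamma$ and the polynomial exponents of $K_\eps$ and $L_\eps$, and with the $\hat Y$ step replaced by the direct computation $Y \gets AZ$. The substitution is justified because $AZ$ is the exact minimizer of $\Phi(A - YZ^T)$ over $Y \in \R^{n\times k}$: for any such $Y$, the identity $(A - YZ^T)(A - YZ^T)^T = A(I-ZZ^T)A^T + (AZ - Y)(AZ - Y)^T$ expresses the Gram matrix as a sum of two PSD matrices, so $\sigma_i(A - YZ^T) \geq \sigma_i(A - AZZ^T)$ for all $i$ by the monotonicity of eigenvalues under PSD addition, and hence $\Phi(A - YZ^T) \geq \Phi(A - AZZ^T)$ since $\phi$ is increasing. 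Thus no Schatten-regression analog of Lemma~\ref{lem: Y hat} is needed.

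The technical core is the $\Phi$-analog of Lemma~\ref{lem:(p,r)-norm-preserved p>2}, which the excerpt already sketches. Starting from~\eqref{eqn:PCP_eta1} and~\eqref{eqn:key_inequality}, I would split on the magnitude of $\sigma_i(A(I-Q))$: in the regime $\sigma_i(A(I-Q))^2 \geq (1/\eps)\eta_1\|A-A_k\|_F^2$, condition (a) yields the multiplicative bound $\phi(\sigma_i(SA(I-Q))) = (1\pm\alpha\eps)\phi(\sigma_i(A(I-Q)))$; in the complementary regime, conditions (b) and (c) together give the additive bound $\phi(\sigma_i(SA(I-Q))) = \phi(\sigma_i(A(I-Q))) \pm K_\eps L_{\sqrt{\eta_1}}\phi(\|A-A_k\|_F)$, where $L_{\sqrt{\eta_1}}$ normalizes $\phi(\sqrt{\eta_1}\|A-A_k\|_F)$ against $\phi(\|A-A_k\|_F)$. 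Condition (d) then bounds $\phi(\|A-A_k\|_F) \leq \gamma \Phi(A-A_k)$ via $\phi(\sqrt{\sum_i \sigma_i^2}) \leq \phi(\sum_i \sigma_i) \leq \gamma \sum_i \phi(\sigma_i)$ as noted in the excerpt.

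With this per-singular-value bound in hand, the arguments of Lemmata~\ref{lem:find Q hat} and~\ref{lem:projection-away for A} transfer to $\Phi_r(\cdot) := \sum_{i=1}^r \phi(\sigma_i(\cdot))$ with only cosmetic changes, yielding $\Phi_r(A(I-\hat Q')) \leq (1+c_1\eps)\Phi_r(A-A_k) + c_2 r K_\eps L_{\sqrt{\eta_1}} \Phi(A-A_k)$ for constants depending on $\alpha,\gamma$. The tail argument of Lemma~\ref{lem:hat Q is good} uses only monotonicity of $\phi$ and the contraction $\sigma_i(A(I-\hat Q')) \leq \sigma_i(A)$, so with $r = k/\eps$ it yields $\Phi(A(I-\hat Q')) - \Phi_r(A(I-\hat Q')) \leq O(\eps)\Phi(A-A_k)$. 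Since $K_\eps,L_\eps = \poly(1/\eps)$, the choice $\eta_1 = \poly(\eps/k)$ (with the exponent dictated by those of $K_\eps$ and $L_\eps$) makes $r K_\eps L_{\sqrt{\eta_1}} = O(\eps)$, closing the $(1+O(\eps))$-approximation guarantee.

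For the runtime, Lemma~\ref{lem:add-mul_spec_approx} applied with $K = \Theta(\eps/\eta_1) = \poly(k/\eps)$ produces $S$ in $O(\nnz(A)\log n) + \tilde{O}(n\poly(k/\eps))$ time; the subsequent subspace embedding and thin SVD of $SAT$ (via Lemmata~\ref{lem:spec_approx} and~\ref{lem:thin_svd}) take an additional $\tilde{O}(n\poly(k/\eps))$ time to produce $Z$. Finally, $Y \gets AZ$ costs $O(\nnz(A)\cdot k)$, matching the claimed runtime $O(\nnz(A)(k+\log n)) + \tilde{O}(n\poly(k/\eps))$. The main obstacle I anticipate is carefully tracking the interplay between $\eps$, $\eta_1$, $K_\eps$, and $L_{\sqrt{\eta_1}}$ so that the additive error $r K_\eps L_{\sqrt{\eta_1}}\Phi(A-A_k)$ is absorbed into the multiplicative factor while keeping $1/\eta_1 = \poly(k/\eps)$; since $L_\eps$ must vanish polynomially in $\eps$ while $K_\eps$ and $r$ blow up polynomially, the constraints are consistent precisely under the hypothesis $K_\eps, L_\eps = \poly(1/\eps)$, and verifying this closure is where the bookkeeping is delicate.
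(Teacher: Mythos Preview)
Your proposal is correct and matches the paper's own argument essentially step for step: the paper also replaces the $\hat Y$ step by outputting $Y=AZ$ directly (costing $O(\nnz(A)\cdot k)$), derives the same per-singular-value inequality from conditions (a)--(d), and then asserts that Lemmata~\ref{lem:find Q hat}--\ref{lem:hat Q is good} carry over to $\Phi_r$ with only cosmetic changes. Your Gram-matrix justification that $AZ$ exactly minimizes $\Phi(A-YZ^T)$ and your explicit tracking of the $rK_\eps L_{\sqrt{\eta_1}}$ term are in fact more careful than the paper's terse inline derivation, which absorbs these factors into unspecified constants.
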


We remark that a few common loss functions satisfy our conditions for $\phi$. These include the Tukey $p$-norm loss function $\phi(x) = x^p\cdot \mathbf{1}_{\{x\leq \tau\}} + \tau^p\cdot \mathbf{1}_{\{x>\tau\}}$, the $\ell_1$-$\ell_2$ loss function $\phi(x) = 2\sqrt{1+x^2/2}-1$ and the Huber loss function $\phi(x) = x^2/2\cdot\mathbf{1}_{\{x\leq \tau\}} + \tau(x-\tau/2)\cdot \mathbf{1}_{\{x > \tau\}}$.

\bibliographystyle{alpha}

\bibliography{literature}

\end{document}